\documentclass[10pt]{article}
%\documentclass{svjour3}
%\journalname{Scientific Report}
%\usepackage{lineno}
\usepackage{graphicx}
\usepackage[all]{xy}
\input xy
\usepackage{amsmath}
\usepackage{amsfonts}
\usepackage{amssymb}
\usepackage{url}
\usepackage{multirow}

\usepackage[mathcal]{eucal}
\usepackage{enumerate}
\usepackage{enumitem}
\usepackage{anysize}
\usepackage[english]{babel}
\usepackage{hyperref}
\usepackage{accents}

\usepackage{float}
\usepackage{xcolor}

%%Teoremi, Lemmi, dimostrazioni etc

%\newtheorem{problem}{Problem}

\newtheorem{example}{Example}
\newtheorem{theorem}{Theorem}
\newtheorem{lemma}{Lemma}

\newenvironment{proof}{\noindent{\bf Proof.}}%
{\hspace*{\fill}$\Box$\par\vspace{4mm}}

%simboli ricorrenti di probabilità
%\newcommand{\kn}[2] {(#1)^{\underline{#2}}}
\newcommand{\kn}[2] {#1^{\underline{#2}}}
\newcommand{\col} {\mathbf{c}}
\newcommand{\mbar}{\overline{m}}
\newcommand{\pr} {{\rm Pr}}
\newcommand{\pra}[1] {\pr\left\{#1\right\}}
\newcommand{\pd}{\mathbb{P}}
\newcommand{\E} {\mathbb{E}}
\newcommand{\Ea}[1] {\E\left(#1\right)}

\newcommand{\zscore}{$\mathbf{z}$-score}
\newcommand{\dist} {\mathrm{dist}}

%Spazi e note
\newcommand{\mybreak} {\par\vspace{2mm}\noindent}

% altro
\def\cadre{$$\vcenter\bgroup\advance\hsize by -2em\noindent
	\refstepcounter{equation}(\theequation)~\ignorespaces}
\makeatletter
\def\endcadre{\egroup\eqno$$\global\@ignoretrue}

%%%%%%%%%%%%%%%%%%%%%%%%%%%%%%%%%%%%%%%%%%%%
\begin{document}

\title{A novel method for assessing and measuring homophily in networks through second-order statistics}

%\author{Nicola Apollonio\footnote{Istituto per le Applicazioni del Calcolo ``Mauro Picone'', Consiglio Nazionale delle Ricerche, Via dei Taurini 19, 00185 - Rome, Italy . Email: \texttt{nicola.apollonio@cnr.it}.}
%\and 
%Paolo G. Franciosa\footnote{\textbf{Corresponding author} Dipartimento di Scienze Statistiche, Universit\`a di Roma ``La Sapienza'',
%piazzale Aldo Moro 5, 00185 Roma, Italy. Email: \texttt{paolo.franciosa@uniroma1.it}.} 
%\and 
%Daniele Santoni\footnote{Istituto di Analisi dei Sistemi ed Informatica ``Antonio Ruberti'', Consiglio Nazionale delle Ricerche, Via dei Taurini 19, 00185 - Rome, Italy E-mail: \texttt{daniele.santoni@iasi.cnr.it}.}

\author{Nicola Apollonio\textsuperscript{1}
\and 
Paolo G. Franciosa\textsuperscript{2 *}
\and 
Daniele Santoni\textsuperscript{3}
\\
~
\\
\textsuperscript{1} \small{Istituto per le Applicazioni del Calcolo ``Mauro Picone'', Consiglio Nazionale delle Ricerche,}\\
\small{Via dei Taurini 19, 00185 - Rome, Italy . Email: \texttt{nicola.apollonio@cnr.it}.}
\\
\textsuperscript{2 *} \small{\textbf{Corresponding author} Dipartimento di Scienze Statistiche, Universit\`a di Roma ``La Sapienza'',}\\
\small{piazzale Aldo Moro 5, 00185 Roma, Italy. Email: \texttt{paolo.franciosa@uniroma1.it}.} 
\\
\textsuperscript{3} \small{Istituto di Analisi dei Sistemi ed Informatica ``Antonio Ruberti'', Consiglio Nazionale delle Ricerche,}\\
\small{Via dei Taurini 19, 00185 - Rome, Italy E-mail: \texttt{daniele.santoni@iasi.cnr.it}.}
}

\date{}

\maketitle

\begin{abstract}
We present a new method for assessing and measuring homophily in networks whose nodes have categorical attributes, namely when the nodes of networks come partitioned into classes (colors).  We probe this method in two different classes of networks: i) protein-protein interaction (PPI) networks, where nodes correspond to proteins, partitioned according to their functional role, and edges represent functional interactions between proteins ii) Pokec on-line social network, where nodes correspond to users, partitioned according to their age, and edges respresent friendship between users.

Similarly to other classical and well consolidated approaches, our method compares the relative edge density of the subgraphs induced by each class with the corresponding expected relative edge density under a null model.
The novelty of our approach consists in prescribing an endogenous null model, namely, the sample space of the null model is built on the input network itself.
This allows us to give exact explicit expression for the \zscore\ of the relative edge density of each class as well as other related statistics.
The \zscore s directly quantify the statistical significance of the observed homophily via \v{C}eby\v{s}\"{e}v inequality.
The expression of each \zscore\ is entered by the network structure through basic combinatorial invariant such as the number of subgraphs with two spanning edges.
Each \zscore\ is computed in $O(n+m)$ time for a network with $n$ nodes and $m$ edges.
This leads to an overall efficient computational method for assesing homophily. 
We complement the analysis of homophily/heterophily by considering \zscore s of the number of isolated nodes in the subgraphs induced by each class, that are computed in $O(nm)$ time.
Theoretical results are then exploited to show that, as expected, both the analyzed network classes are significantly homophilic with respect to the considered node properties.

\end{abstract}
\vspace{1cm}
\textbf{Keywords:} Network Homophily, Random Colorings, Protein-Protein Interaction Networks, Social Networks, \zscore.

%%%%%%%%%%%%%%%%%%%%%%%%%%%%%%%%%%%

\section{Introduction}\label{sec:intro}

The \emph{homophily principle} states that ``similarity breeds connections''\cite{Mcp0}.
This principle---born in sociology---once declined into Network Theory, reads as \emph{nodes in a network are more likely to be linked to  nodes sharing similar attributes}.
The effectiveness of homophily in social networks has been extensively demonstrated across various instances \cite{Auk,Cheadle,karimi,Kossi,Mcp1,Shrum}: social networks exhibit homophily with respect to attributes such as gender, age, ethnicity, occupation, social class and many others.
This simply means that people preferentially interact with people sharing the same cultural and sociological attributes.
Putting it succintely: ``birds of a feather flock together''\cite{Mcp0,eakl}. In contrast, heterophilic networks are those networks whose nodes preferentially interact with nodes having different attributes values. 
Homophily can also be seen as the categorical counterpart of “assortative mixing”---the correlation of attributes across link---and, as such, at least beyond a certain amount of assortativity, it binds the structure of networks~\cite{newman0}, and influences the curvature of the cumulative degree distribution under the preferential attachment evolutionary mechanism\cite{Kim}.
In view of this discussion, homophily qualifies as a genuine network property, namely, a property that when possessed to some extent, impacts non trivially on the structure of the network.
A quantitative understanding of homophily in networks is therefore useful both from a theoretical and a practical point of view.
A step forward in this direction is taken once we realize that homophily in networks certainly fits in the frame of ``community detection"~\cite{lancichinetti2010,gulbache2008}; observe that communities in complex networks identify high order homogeneous structures. 
Arguing as in \cite{yangleskove}, network community detection can be seen as a procedure consisting of two stages: one stage consists of extracting communities by relying on the geometric structure of the networks, while the second stage consists in ``abstracting" communities, namely, in identifying the common features and attributes of community members. Such common features and attributes are usually referred to as \emph{functions} or \emph{node characteristic}\cite{park2007}. 
From this perspective, communities are first detected based on their geometry and then evaluated based on their functions.
The other way round is also meaningful: given a functional description of a network, namely a partition of its nodes into sets of nodes with the same node characteristic, assessing whether or not the class of the partition have a certain amount of geometric structure, i.e.\ assessing whether or not such classes are communities, is tantamount to assessing whether or not the network is homophilic with respect to node characteristic.
Innocent as it may seem, this observation already provides a way of quantifying homophily: if the functional description correlated with the geometry of the network or, equivalently, if the network were homophilic with respect to node characteristics, then the node-induced subgraphs of each class of the partition should be relatively denser than what we expect under some suitable null model---the relative edge density of a subgraph $H$ of a given graph $G$ is the density of $H$ over the density of $G$.
Newman's celebrated \emph{modularity} index \cite{newman} formulates the null hypothesis as the relative expected density of a random graph with the same degree distribution as the input graph. Modularity is thus the relative edge density of monochromatic subgraphs minus the expected relative edge density of monochromatic subgraphs under the null hypothesis that edges are distributed at random among nodes. Since the index lies in interval $[-\frac{1}{2},1]$, its value directly quantifies network homophily: the larger the index the more homophilic  the network is. Modularity thus provides a scale for comparing homophily of different networks. Notice that Newman's index resorts to an exogenous model (the \emph{configuration model}) to test homophily. 
In this paper, by revising the idea in \cite{park2007}, we propose to measure network homophily by testing the observed structure (i.e.\ the relative edge density of functional classes) against the expected structure under an endogeneous random model: the input graph itself will be the sample space for the null hypothesis. 
With this aim in mind, in this paper we propose a new statistical model that builds on the approach in \cite{park2007} (developed for networks with only two functional classes), extend it to an arbitrary number of classes, and strengthen it by exploiting second order statistics based on a uniformly random coloring of the input network with the same color distribution. This machinery yields an explicit exact formula for the \zscore\ of a suitable defined homophily index as well as of the number of isolated nodes of each functional class. The statistical significance of the observed homophily is then obtained through \v{C}eby\v{s}\"{e}v inequality. As one may expect, the structure of the network enters second order statistics through the number of its subgraphs with two spanning edges, namely, the number of its $P_3$'s (if the two edges are adjacent) and the number its $2K_2$'s (if the two edges are not adjacent). This means that our analysis does not require exogenous models (random graphs, for instance) to make comparisons for assessing homophily. Throughout the rest of the paper $P_{3}$ is the graph on three nodes joined by two edges, namely the graph $\xymatrix@C=8pt{
\bullet\ar@{-}[r] & \bullet\ar@{-}[r] & \bullet }$, while $2K_{2}$ is the graph on four nodes  with two edges without common endpoints, namely, the graph $\raisebox{6pt}{\xymatrix@R=-1pt@C=8pt{
\bullet\ar@{-}[r]
& \bullet  \\
\bullet\ar@{-}[r] & \bullet}}.$

In line with the work of \cite{park2007}, we probe our theoretical results on two different network classes: i) Protein-Protein Interaction (PPI) networks, where nodes correspond to proteins, partitioned according to their functional role, and edges represent functional interactions between proteins ii) on-line social network where nodes correspond to users, partitioned according to their age, and edges represent friendship between users. As expected, numerical results provide strong evidence of the homophilic nature of the considered networks with respect to the corresponding node properties: protein function for PPI and age class for social network.

\section{Homophily in networks}\label{sec:homnet} 
As sketched in Section \ref{sec:intro}, we look at homophily as a network parameter (actually as an array of parameters, see Section \ref{sec:measure}) measuring to what extent the attributes (node characteristics, functions) of the nodes of the networks correlate across the edges. To give a precise meaning to such a correlation, we follow the approach in \cite{park2007} which we now discuss in more details. Of course, nothing bad is happening if we think of node characteristics as node colors and, consistently, of the functional description as a partition of the node set into color classes so that (potential) communities are sets of nodes with the same color. Consequently, we deal with a simple undirected graph $G$ with $n$ nodes and $m$ edges whose nodes are partitioned into a number $s$ of color classes.
The simple original model in \cite{park2007} refers to the case of two colors ($s=2$) denoted by $0$ and $1$. Edges of $G$ are then classified as $(0,0)$-edges, $(0,1)$-edges and $(1,1)$-edges according to the color at their endpoints.
Let $c_0$ (resp., $c_1$) be the number of nodes of $G$ having color $0$ (resp., $1$), with $c_{0}+c_{1} = n$; furthermore, let  $m_{i,j}$ be the number of $(i,j)$-edges, $i,\,j\in\{0,1\}$, with $m_{0,0}+m_{0,1}+m_{1,1} = m$: if the functional definition of the communities correlated with the structure of $G$, then we should expect a statistical significant deviation between $m_{0,0}$, say, and what we would expect if characteristic $0$ were randomly distributed  among the nodes of the graph, namely, if any node had an equal chance of possessing it. 
In \cite{park2007}, it is proposed to measure this deviation by the three ratios:
$$\omega_0=\frac{m_{0,0}}{\overline{m}_{0,0}},\quad \eta_{0,1}=\frac{m_{0,1}}{\overline{m}_{0,1}},\quad \omega_1=\frac{m_{1,1}}{\overline{m}_{1,1}}$$
where, for $i,\,j\in\{0,1\}$ and $i\not=j$
$$\overline{m}_{i,i}=m\frac{c_i(c_i-1)}{n(n-1)}\quad \text{and}\quad \overline{m}_{i,j}=m\frac{2c_ic_j}{n(n-1)},$$ 
are the expected number of $(i,i)$-edges and $(i,j)$-edges, respectively, under the hypothesis that properties $0$ and $1$ are randomly distributed  over the node set of $G$ (see Section \ref{sec:main} for proofs).
Just by rewriting $\omega_i$ and $\eta_{i,j}$ as
\begin{equation}\label{eq:reldens}
	\left. \omega_i =\frac{2m_{i,i}}{c_i(c_i-1)}\middle/\frac{2m}{n(n-1)}\right.\quad \text{and}\quad \eta_{i,j}=\left.\frac{m_{i,j}}{c_ic_j}\middle/\frac{2m}{n(n-1)}\right.
\end{equation}
one sees that $\omega_i$ is nothing but the normalized intracommunity density; analogously, $\eta_{i,j}$ is the normalized intercommunity density~\cite{yangleskove}.
In this perspective, homophily (and heterophily) provides a suggestive interpretation of basic structural graph properties (those that can be captured by first order moments of functions of random partitions into two classes with $c_0$ nodes labeled 0 and $c_1$ nodes labeled 1). In this simple model, graph $G$ is declared $i$-homophilic (or homophilic with respect to property $i$), $i\in\{0,1\}$, if $\frac{m_{i,i}}{\overline{m}_{i,i}}>1$; graph $G$ is declared $(i,j)$-heterophilic if $\frac{m_{i,j}}{\overline{m}_{i,j}}>1$ (we ask the reader to bear the pedantic reference to the indices $i,\,j$ in view of the generalization to more than 2 properties).
Without any other clue about the likelihood or the variability of $\omega_i$ and $\eta_{i,j}$, it is clear that  both the assertions have no statistical significance behind their descriptive power. Moreover, it follows from \eqref{eq:reldens} that $\omega_i$ lies in the interval $[0,1/\rho(G)]$, $\rho(G)$ being the edge density of $G$ and such an interval might be really wide for sparse graphs. To overcome this limitation, \cite{park2007} developed a computational model (feasible only for the case of two colors) aimed at evaluating the likelihood of an observed instance $(\omega_0,\eta_{0,1})$ in the form of a phase diagram in the $m_{0,0}m_{0,1}$-plane. Each point of such a diagram is the frequency of all partitions of the node set $G$ into two parts $C_0$ and $C_1$ with $c_0$ and $c_1$ nodes, respectively, such that the subgraph of $G$ induced by $C_0$ has $m_{0,0}$ edges, while the subgraph induced by $C_1$ has $m - (m_{0,0} + m_{0,1})$ edges, $m$ being the size of $G$. The diagram is computed by exhaustive enumeration for small graphs, while for large graphs only the boundary of the diagram is heuristically computed. In either cases, the likelihood of the observed pair is determined by its position and its darkness (in a grayscale) in the phase diagram. Although this approach has been proven successfully for a wide range of real networks (with only two functional classes), including certain PPI networks \cite{park2007}, it still suffers of the following limitations:
\begin{enumerate}[label={\rm (\alph*)}]\itemsep0em
	\item it is computationally expensive. In fact, an exact evaluation of the phase diagram requires time exponential in the number of nodes in the network, and can be applied to large instances only by exploiting heuristic algorithms on a sample. Also, after sampling a subgraph with $\tilde{m}$ nodes, the complexity is  $O(n^{2}\tilde{m})$;
	\item it can be applied only to two functional classes;
	\item it is rather qualitative.
\end{enumerate}
To overcome these limitations, we propose to compute the \zscore\ of $\omega_i$ and $\eta_{i,j}$ under the null model described in the next section. 
Since, as we show, this can be done for any  number $s$ of colors in $O(s(n+m))$ time, and $s$ is usually a small constant, we have that our algorithm is time optimal, hence (a) and (b) are settled. As for (c), if $Z(\omega_i)$, say, is the \zscore\ of $\omega_i$, then by \v{C}eby\v{s}\"{e}v inequality the probability of the event  $(Z(\omega_i)>\lambda)$ is at most $\lambda^{-2}$ under the null model. Hence $Z(\omega_i)^{-2}$ directly measures the statistical significance of $\omega_i$, at the same time making the method completely quantitative. 
Moreover, we propose to evaluate the \zscore\ of the number of isolated nodes in the subgraph induced by each color, that are expected to be negative values in the case of homophilic network. This computation is computationally harder, requiring $O(n m)$ time, but experimental results show to be quite fast on networks with order of $10^{5}$ edges, and is still applicable to sparse networks with about $10^{6}$ nodes.

\subsection{Design of the new model}
Throughout the rest of the paper, we think of a network as an undirected graph $G$ with node-set $V(G)$ and edge-set $E(G)$. An $s$-\emph{coloring} of $G$ is a surjective map $g: V(G)\rightarrow [s]$, where $[s]:=\{1,\ldots,s\}$ is the set of colors. As previously stipulated, we think of $g$ as the functional description of the network, and of the set $g^{-1}(i)$, consisting of the nodes of $G$ having color $i$, as the functional classes of the description. These classes are our (potential) communities. Hence, in the pair $(G,g)$, $G$ encodes the geometrical description of the network and $g$ encodes its functional description. For instance, Protein-Protein Interaction networks (PPI for shortness) are graphs whose nodes are proteins and whose edges model functional interactions between proteins. Since proteins are classified by the biological function they are responsible for, each protein is uniquely associated with one of the 19 functional classes listed in Table \ref{tabfunction} and which we identify by their labels. Therefore, given a PPI network $G$, the correspondence protein$\mapsto$function defines a surjective map $g$ from the set of nodes of $G$ into a set of 19 labels and, after thinking of the labels as colors, such a correspondence will be our 19-coloring $g$. For the Pokec social network graph, we partitioned the node set into five age classes. Therefore a correspondence user$\mapsto$age defines a 5-coloring. Notice that the classification of ages is not frequency based, so that node classes differ substantially in size.

Let $c_i$, $i\in [s]$, be the number of nodes of $G$ of color $i$ under $g$ and call the integer vector $\mathbf{c}=(c_1,\cdots,c_s)$ the \emph{profile} of $g$. Any other coloring $f: V(G)\rightarrow [s]$ with the same profile as $g$ will be referred to as a \emph{$\col$-coloring of $V(G)$} (or simply $\col$-coloring when $V(G)$ is understood). Our next step is to introduce a probability space that allows us to formulate null hypotheses to test against alternative hypotheses about $(G,g)$. To this end, let $\Phi(\col)$ be the set of all $\col$-colorings of $V(G)$. Since the \emph{multinomial coefficient} with \emph{parts} $c_1,c_2\cdots c_s$, denoted by one of the two symbols below
$$\left( n \atop \mathbf{c}\right),\quad \left( n \atop c_1c_2\cdots c_s\right)\ ,$$
counts the $\col$-colorings of $V(G)$ (see the Appendix for a definition of multinomial coefficient), it follows that $|\Phi(\col)|=\left( n \atop \mathbf{c}\right)$. A \emph{random $\col$-coloring} is the random variable $F$ with values in $\Phi(\col)$ and with probability mass function given by
$$\pd_{n,\col}(F)=\pr\{F=f\}=\left( n \atop \mathbf{c}\right)^{-1}\ ,$$
namely, all $\col$-colorings are equally likely (see the Appendix for a more formal definition not needed here). Having the probability space $(\Phi(\col),\pd_{n,\col})$ we test functions of $(G,g)$ versus the same functions under the null hypothesis $(G,F)$, where $F$ is a random $\col$-coloring of $V(G)$. We therefore define  several random variables as functions of the random variable $F$, and such variables enable us to give first and second order moments of those statistics crucial for our purposes. We close this section by describing the former ones, deferring the description of the latter ones to the next section. 
\mybreak
For a node $v\in V(G)$ and a color $i\in [s]$, let $X_v^i$ be the Bernoulli random variable that equals to 1 if and only if node $v$ has color $i$ under the random $\col$-coloring $F$, i.e.\ $X_v^i$ is the indicator of the event $F(v)=i$. Since $X^i_v$ is a Bernoulli random variable, by~(\ref{eq:pipi}) in the Appendix, one has
\begin{equation*}\label{eq:pipi1}
	\Ea{X^i_v}=\pra{X^i_v=1}=\frac{c_i}{n}\ \ .
\end{equation*}
Analogously, for the product of two such variables for $u,\,v\in V$, $u\not=v$, and $i,\,j\in [s]$, after resorting to~(\ref{eq:pipi}) and~(\ref{eq:pipj}) in the Appendix, one has 
\begin{equation}\label{eq:pipj1}
	\Ea{X^i_uX^j_v}=\pra{X^i_uX^j_v=1}=\pra{X^i_u=1, X^j_v=1}=\begin{cases}
		\frac{\kn{c_i}{2}}{\kn{n}{2}} & \text{if $i=j$}\\
		\frac{c_ic_j}{\kn{n}{2}} & \text{if $i\not=j$}
	\end{cases}
\end{equation}
where, after adhering to the notation in \cite{knu}, for a positive integer $a$ and a nonnegative integer $r$, we have denoted by the symbol $\kn{a}{r}$ the \emph{falling $r$-th power} of $a$ (see also the Appendix for more details), namely $\kn{a}{r}=a(a-1)\cdots(a-r+1)$, with $\kn{a}{0}=1$. Thus, the $2$-nd falling power $\kn{a}{2}$ of $a$ equals $a(a-1)$. The above formula  immediately shows that the random variables $X_v^i$ as $v$ runs in $V(G)$ and $i$ runs in $[s]$ are not independent (neither are $X_u^i$ and $X^j_v$). Without pretending to be rigorous, this is only due to the fact that a random $\col$-coloring can be thought of as the outcome of experiments where one draws from a bin ``without replacement". However,  variables in $\{X_v^j \ |\ v\in V, j\in [s]\}$ are \emph{exchangeable}, in the sense that the joint distribution of any subset of them does not depend on the order of drawing (the distribution is symmetric with respect to permuting indices). Hence, as long as we consider statistics based only on linear combinations of $X_v^i$, there is no other dependency other than the one inherited by the sampling procedure. To let the graph come into the structure of the dependency among variables, we have to consider second order statistics.
\mybreak
Let us come to edges now and, for an edge $uv\in E(G)$ and colors $i,\,j\in [s]$, let $Y_{uv}^{i,j}$ be the Bernoulli random variable which is equal to 1 if and only if one of the endpoints of $uv$ has color $i$ and the other one has color $j$. Hence , if $i=j$, then $Y_{uv}^{i,i}=X_u^iX_v^i$ while if $i\not= j$, then $Y_{uv}^{i,j}=X_u^iX_v^j+X_u^jX_v^i$. Therefore by \eqref{eq:pipj1}
\begin{equation}\label{eq:meanedge}
	\Ea{Y_{uv}^{i,j}}=\pra{Y_{uv}^{i,j}=1}=\begin{cases}
		\frac{\kn{c_i}{2}}{\kn{n}{2}} & \text{if $i=j$}\\
		2\frac{c_ic_j}{\kn{n}{2}} & \text{if $i\not=j$}
	\end{cases}\ \ .
\end{equation}
One more random variable is needed to compute the first two moments of the statistics we are interested in. Let $T$ be a nonempty subset of $V(G)$ and let $i\in [s]$ be a color; define $D^i_T$ as the number of elements of $T$ having color $i$; by definition, $D^i_T$ has the following expression:
\[
D^i_T=\sum_{v\in T}X^i_v
\]
Let $A$ and $B$ be disjoint subsets of $V(G)$. To determine the distribution of $D_T^i$ we are interested in the probability of the event that all the elements of $A$ have color $i$ while all those of $B$ have not. Let $\Omega_i(A,B)$ denote this event (for more on events of this type refer to the Appendix). Thus
$$\Omega_i(A,B)=\left(F(a)=i,\,\forall a\in A\right)\wedge\left( F(b)\not=i,\,\forall b\in B\right).$$
Hence
$$\left(D^i_T=h\right)=\bigvee_{\substack{R\subseteq T\\ |R|=h}}\Omega_i(R,T\setminus R)$$ and since the events on the right hand side of the identity above are mutually incompatible, after equation~(\ref{eq:forisolates}) in the Appendix and after setting $t=|T|$, one has 
\begin{equation}\label{eq:sum}
	\pra{D^i_T=h}=\left(t\atop h\right)\frac{\kn{c_i}{h}\kn{(n-c_i)}{t-h}}{\kn{n}{t}}
\end{equation}
and the close resemblance with the binomial distribution with parameters $t$ and $\frac{c_i}{n}$ is clear: powers are replaced by falling powers. This is not an accident: $D^i_T$ follows a hypergeometric distribution $\text{Hyp}(n,c_i,t)$ giving the probability of success by drawing without replacement $t$ balls from an urn containing $n$ balls, $c_i$ of which are successfull. By choosing $T$ equal to the neighborhood of a node $v\in V(G)$, one immediately gets the distribution of the random number of neighbors of node $v$ with color $i$, i.e.\ $D_{N_G(v)}^i\sim\text{Hyp}(n,c_i,\deg_G(v))$.

\subsection{Homophily, heterophily and isolated nodes: first and second order moments}\label{sec:main}
We are now in position to describe statistics capable of assessing whether PPI networks are homophilic. Let $(G,g)$ be a pair consisting of a PPI network $G$ with $n$ nodes and $m$ edges and a $\col$-coloring $g$.
%As we did in Section~\ref{sec:stateofart},
We classify the $m$ edges of $G$ according to the colors of their endpoints. Consequently, we say that edge $uv\in E(G)$ is a $(i,j)$-edge of $(G,g)$ if $\{g(u),g(v)\}=\{i,j\}$, $i,\,j\in [s]$---with a little abuse of notation we also admit $i=j$. Notice that $(i,i)$-edges, the \emph{intra-community} edges, are the edges of $G$ induced by the nodes in  color class $i$ (those responsible for the homophily of $(G,g)$) and, for $i\not=j$, $(i,j)$-edges, the \emph{inter-community} edges, are the edges with one endpoint in color class $i$ and the other one in color class $j$ (those responsible for the heterophily of $(G,g)$). Let $m_{i,i}$ and $m_{i,j}$ be the number of $(i,i)$-edges and $(i,j)$-edges of $(G,g)$, respectively. Therefore, for any two (possibly equal) colors $i,\, j\in [s]$, the random variable
$$M^{i,j}=\sum_{uv\in E(G)}Y_{uv}^{i,j}$$
counts the number of $(i,j)$-edges of $(G,F)$ where $F$ is a random $\col$-coloring. Let $\mbar_{i,j}$ be the expected value of $M^{i,j}$: by \eqref{eq:meanedge} and the linearity of expectation it follows straightforwardly that  
\begin{equation*}
	\mbar_{i,j}=\begin{cases}
		m\frac{\kn{c_i}{2}}{\kn{n}{2}} & \text{if $i=j$}\\
		2m\frac{c_ic_j}{\kn{n}{2}} & \text{if $i\not=j$}
	\end{cases}\ \ ,
\end{equation*}
which generalizes to an arbitrary number of colors the corresponding expressions given above for two colors. Analogously, we define the $i$-homophily of $(G,g)$ and $(i,j)$-heterophily of $(G,g)$, $i\not=j$, as the ratios  
$$\omega_i=\frac{m_{i,i}}{\mbar_{i,i}},\quad \eta_{i,j}=\frac{m_{i,j}}{\mbar_{i,j}},$$
namely, the relative intra- and inter-community density, respectively (recall the identities in \eqref{eq:reldens}). If for all $i,\, j\in [s]$ (possibly $i=j$) we knew the variance $\sigma^2_{i,j}$ of $M^{i,j}$, then we could compute the \zscore\ of the observed  $\omega_i$ e $\eta_{i,j}$ as the ratios
\begin{equation}\label{eq:zscore}
Z(\omega_i)=\frac{m_{i,i}-\mbar_{i,i}}{\sigma_{i,i}}=Z(m_{i,i}),\quad Z(\eta_{i,j})=\frac{m_{i,j}-\mbar_{i,j}}{\sigma_{i,j}}=Z(m_{i,j})\ \ .
\end{equation} 
By \v{C}eby\v{s}\"{e}v inequality, if we assume, for instance, the null hypothesis that the observed value $\omega_i$ is a value assumed by the random variable $\frac{M^{i,i}}{\mbar_{i,i}}$ in the probability space $(\Phi(\col),\pd_{\col,n})$---which is tantamount to assume that $(G,g)$ does not display $i$-homophily---then the confidence level for accepting the null hypothesis would be at most $Z^{-2}(\omega_i)$. Deferring for a while the computation of $\sigma^2_{i,j}$, let us examine another useful statistic for $(G,g)$: the number $l_i$ of isolated nodes in the subgraph induced by color $i$, i.e.\ the number of nodes in color class $i$ having no neighbors in color class $i$. Call any such node $i$-\emph{isolated} and observe that by definition the number of $i$-isolated nodes is
$$l_i=|\left\{v\in V(G) \ | \ g(v)=1\wedge g(w)\not=i,\,\forall w\in N_G(v)\right\}|\ \ .$$
Let $L^i$ be the random variable defined as the number of $i$-isolated nodes of $(G,F)$, where $F$ is a random $\col$-coloring of $V(G)$. Although the random variables $L^i$'s and $M^{i,i}$'s are clearly dependent (as confirmed by results plotted in Fig.~\ref{figcorrelation}) in the next section---at the extreme cases, for instance, $\pra{M^{i,i}=0 \ |\ L^i\geq c_i-1}=1$ and $\pra{M^{i,i}\geq \frac{c_i}{2} \ |\ L^i=0}=1$---the joint knowledge of corresponding statistics $l_i$ and $\omega_i$ is still quite informative. Indeed, consider two graphs $G$ and $\tilde{G}$ on the same node set and let $g$ be a $\col$-coloring of $V(G)$. The $i$-homophily of $(G,g)$ and $(\tilde{G},g)$ could be well the same, but the number of $i$-isolated nodes can be significantly different as in the following example.
\begin{example} For a positive integer $t$ denote by $K_t$ the complete graph on $t$ nodes and by $\overline{K}_t$ its complement, namely the graph with $t$ nodes and no edges. Also denote by $K_{1,t}$ the complete bipartite graph with one node in a color class and $t$ nodes in the other class. Finally, for graphs $G$ and $H$ denote by $G+H$ their disjoint union, namely the graph obtained by picking a copy of $G$ a copy of $H$ disjoint from $G$, and then forming the union of the two copies. Consider the subgraphs $G_i$ and $\tilde{G}_i$ induced by color $i$ in $G$ and $\tilde{G}$, respectively. If, for some positive integer $p$, one has $G_i\cong K_p+\overline{K}_{2p}$ and $\tilde{G}_i\cong K_{p-1}+K_{1,p-1}+\overline{K}_{p}$, then $G_i$ and $\tilde{G}_i$ have the same $i$-homophily but the number of $i$-isolated nodes in $G_i$ is twice the number of $i$-isolated nodes in $\tilde{G}_i$.
\end{example}
Therefore, if we knew that $\omega^i\leq \tilde{\omega}^i$ and $l^i\geq \tilde{l}^i$, then this fact would support the claim that $(\tilde{G},g)$ is more $i$-homophilic than $(G,g)$ because the relative density of property $i$ is less concentrated in $(\tilde{G},g)$ than in $(G,g)$. In conclusion, to assess $i$-homophily of $(G,g)$ the use of the statistics $(Z(\omega^i),Z(l^i))$, where $Z(\omega^i)$ and $Z(l^i)$ are the \zscore s of $\omega^i$ and $l^i$, respectively, could be useful. The next theorem, besides summarizing what we have said about the first order moments of the statistics considered so far, also gives the announced expression for $\sigma^2_{i,j}$ and the expression for the variance of $L^i$. We then exploit these results to compute \zscore s as a tool for analyzing networks in the next section.

\begin{theorem}\label{thm:main}
Let $G$ be a graph with $n$ nodes and $m$ edges and let $(\Phi(\col),\pd_{n,\col})$ be the probability space of the random $\col$-colorings, where $\col=(c_1,\ldots,c_s)$. Assume $c_i>0,\,\forall i\in [s]$. Moreover, let $\pi_3(G)$ denote the number of (not necessarily induced) copies of $P_3$ in $G$. For $i,\, j\in [s]$, consider the random variables $M^{i,j}$ and $L^i$ defined on $(\Phi(\col),\pd_{n,\col})$. Then
\begin{enumerate}[label={\rm\arabic*)}]
\item\label{com:i} for $i\in [s]$ the expected value and the variance of random variable $M^{i,i}=\sum_{uv\in E(G)}Y^{i,i}_{uv}$ where $Y^{i,i}_{uv}=X^i_uX^i_v$ for all $uv\in E(G)$, namely the random number of $(i,i)$-edges of $(G,F)$ under a random coloring $F$, are respectively given by 
\begin{align*}
	\mbar_{i,i}&=m\frac{\kn{c_i}{2}}{\kn{n}{2}},\\
	\sigma^2_{i,i}&=m\frac{\kn{c_i}{2}}{\kn{n}{2}}\left(1-m\frac{\kn{c_i}{2}}{\kn{n}{2}}\right)+2\left\{\left(\frac{\kn{c_i}{3}}{\kn{n}{3}}-\frac{\kn{c_i}{4}}{\kn{n}{4}}\right)\pi_3(G)+\frac{\kn{c_i}{4}}{\kn{n}{4}}\left(m \atop 2\right)\right\};
\end{align*}
\item\label{com:ii}for $i,\, j\in [s]$, $i\not=j$, the expected value and the variance of random variable $M^{i,j}=\sum_{uv\in E(G)}Y^{i,j}_{uv}$ where $Y^{i,j}_{uv}=(X^i_uX^j_v+X^j_uX^i_v)$ for all $uv\in E(G)$, namely the random number of $(i,j)$-edges of $(G,F)$ under a random coloring $F$, are respectively given by
\begin{align*}
	\mbar_{i,j}&=2m\frac{c_ic_j}{\kn{n}{2}},\\
	\sigma^2_{i,j}&=2m\frac{c_ic_j}{\kn{n}{2}}\left(1-2m\frac{c_ic_j}{\kn{n}{2}}\right)+2\left[\left(\frac{c_i\kn{c_j}{2}+\kn{c_i}{2}c_j}{\kn{n}{3}}-4\frac{\kn{c_i}{2}\kn{c_j}{2}}{\kn{n}{4}}\right)\pi_3(G)+4\frac{\kn{c_i}{2}\kn{c_j}{2}}{\kn{n}{4}}\left(m \atop 2\right)\right];
\end{align*}
\item\label{com:iii} for $i\in [s]$ let $L^i$ be the random number of $i$-isolated nodes of $(G,F)$ under a random coloring $F$, namely the random variable $L^i=\sum_{v\in E(G)}W^i_v$, where $W^i_v$ is the Bernoulli variable defined as the indicator of the event $(F(v)=i)\wedge (F(w)\not=i,\,\forall w\in N_G(v))$; then the expected value and the variance of $L^i$ are respectively given by  
\begin{align*}
	\Ea{L^i}=&\frac{c_i}{n}\sum_{v\in V(G)}\frac{\kn{(n-c_i)}{\deg_G(v)}}{\kn{(n-1)}{\deg_G(v)}},\\
	{\rm var}(L^i)&=\Ea{L^i}\left(1-\Ea{L^i}\right)+\frac{\kn{c_i}{2}}{\kn{n}{2}}\sum_{\substack{(u,v)\in V(G)\\ u\not=v, uv\not\in E(G)}}\frac{\kn{(n-c_i)}{b(u,v)}}{\kn{(n-2)}{b(u,v)}},
\end{align*}
where we have set $b(u,v)=|N_G(u)\cup N_G(v)|=\deg_G(u)+\deg_G(v)-|N_G(u)\cap N_G(v)|$. Clearly, $c_i-L^i$ is the random number of nodes of color $i$ spanned by the $(i,i)$-edges.
\end{enumerate} 
\end{theorem}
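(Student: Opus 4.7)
The plan is to treat the three parts uniformly. All three expected values follow at once from linearity of expectation applied to the given indicator decompositions $M^{i,i}=\sum_{uv\in E(G)}Y^{i,i}_{uv}$, $M^{i,j}=\sum_{uv\in E(G)}Y^{i,j}_{uv}$ and $L^i=\sum_{v\in V(G)}W^i_v$, combined with \eqref{eq:meanedge} and, for $\Ea{L^i}$, with formula~(\ref{eq:forisolates}) of the Appendix applied to $A=\{v\}$ and $B=N_G(v)$; these computations are routine.

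Each variance is handled by writing ${\rm var}(X)=\Ea{X^{2}}-\Ea{X}^{2}$, expanding $X^{2}$ into a diagonal sum plus cross terms, and using $Y^{2}=Y$ (resp.~$W^{2}=W$) for Bernoulli indicators so that the diagonal part collapses to $\Ea{X}$. The crucial geometric step for parts~\ref{com:i} and~\ref{com:ii} is to split the $\left(m\atop 2\right)$ unordered pairs of distinct edges of $G$ into those that share an endpoint, which are in bijection with the (not necessarily induced) copies of $P_{3}$ and therefore $\pi_{3}(G)$ in number, and those that are vertex-disjoint, corresponding to the remaining $\left(m\atop 2\right)-\pi_{3}(G)$ copies of $2K_{2}$. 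This is precisely why $\pi_{3}(G)$ and $\left(m\atop 2\right)$ are the only combinatorial invariants of $G$ entering the stated variances.

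For part~\ref{com:i}, a $P_{3}$-pair yields $\Ea{X^{i}_{u}X^{i}_{v}X^{i}_{w}}=\kn{c_i}{3}/\kn{n}{3}$ and a $2K_{2}$-pair yields $\kn{c_i}{4}/\kn{n}{4}$, both being instances of the Appendix's generalization of \eqref{eq:pipj1} to higher-order products; substituting and factoring the coefficient of $\pi_{3}(G)$ gives the stated formula. Part~\ref{com:ii} proceeds along exactly the same lines, the extra care coming from the two-term expansion $Y^{i,j}_{uv}=X^{i}_{u}X^{j}_{v}+X^{j}_{u}X^{i}_{v}$: many of the cross terms produced in a $P_{3}$- or $2K_{2}$-pair vanish by the orthogonality $X^{i}_{v}X^{j}_{v}=0$ (each vertex has a unique color), leaving a $P_{3}$ contribution of $(c_i\kn{c_j}{2}+\kn{c_i}{2}c_j)/\kn{n}{3}$ and a $2K_{2}$ contribution of $4\kn{c_i}{2}\kn{c_j}{2}/\kn{n}{4}$. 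The only obstacle here is the careful bookkeeping of which of the eight terms in a $2K_{2}$-pair survive the orthogonality.

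Part~\ref{com:iii} requires one genuinely new observation, which I expect to be the main obstacle: whenever $uv\in E(G)$ one has $W^{i}_{u}W^{i}_{v}=0$, because $W^{i}_{u}=1$ forces $v$ not to have color $i$ while $W^{i}_{v}=1$ forces it to, so only non-adjacent pairs contribute to the cross sum. For distinct non-adjacent $u,v$, the product $W^{i}_{u}W^{i}_{v}$ is the indicator of $\Omega_{i}(\{u,v\},N_G(u)\cup N_G(v))$, and since $\{u,v\}$ is then disjoint from $N_G(u)\cup N_G(v)$, formula~(\ref{eq:forisolates}) applies with $|A|+|B|=2+b(u,v)$ and yields $(\kn{c_i}{2}/\kn{n}{2})\cdot\kn{(n-c_i)}{b(u,v)}/\kn{(n-2)}{b(u,v)}$. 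Combining these cross terms with the diagonal $\sum_{v}\Ea{(W^{i}_{v})^{2}}=\Ea{L^i}$ and the algebraic identity $\Ea{L^i}-\Ea{L^i}^{2}=\Ea{L^i}(1-\Ea{L^i})$ delivers the claimed expression for ${\rm var}(L^i)$.
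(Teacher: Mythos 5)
Your proposal is correct and follows essentially the same route as the paper: write each variance as the diagonal (Bernoulli) part plus cross terms, split the pairs of distinct edges into $P_3$-pairs and $2K_2$-pairs counted by $\pi_3(G)$ and $\left(m\atop 2\right)-\pi_3(G)$, evaluate the joint probabilities via the falling-factorial formulas, and for $L^i$ use the key observation that $W^i_uW^i_v=0$ on adjacent pairs together with $\Omega_i(\{u,v\},N_G(u)\cup N_G(v))$ on non-adjacent ones. One cosmetic slip: in part 2 a $2K_2$-pair produces four (not eight) product terms and none of them is killed by the orthogonality $X^i_vX^j_v=0$ (that cancellation occurs only for $P_3$-pairs, where two of the four terms vanish); the contributions you state are nonetheless correct.
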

\noindent
A formal proof of Theorem~\ref{thm:main} is given in the Appendix.

A couple of facts are notable before closing the section.
\mybreak
Statistics presented in points \ref{com:i} and \ref{com:ii} in Theorem~\ref{thm:main}  can be easily computed in $O(n+m)$ time, where $n$ is the number of nodes and $m$ is the number of edges in the input graph, assuming we have a constant number of colors. Hence, computing the $s^{2}$  \zscore s for the number of edges $M^{i,i}$ and $M^{i,j}$ is computationally efficient for any input instance. We observe that the method in~\cite{park2007} requires exponential time for an exact evaluation, or $O(s^2n^3)$ time, where $s$ is the number of functional classes, if optimisation heuristics are exploited.
Computing statistics for the number of isolated nodes $L^{i}$ presented in point \ref{com:iii} in Theorem~\ref{thm:main} is more time consuming. As shown in the Appendix, it requires $O(smn)$ time, that can be improved to $O\left(s \cdot \sum_{v \in V}\deg(v)^{2}\right)$. This is still  efficient for sparse large graphs, with up to millions of nodes and edges.
\mybreak
All of the second order statistics presented in the theorem have an expression that encodes part of the structure of the input graphs, e.g.\ its number of $P_3$'s, $2K_2$'s as well as the cardinalities of the set of common neighbors of nonadjacent pair of nodes. This means that the \emph{coefficient of variation} of
$\omega_i$, defined as $\sigma_{i,i}/\overline{m}_{i,i}$ is completely determined by $G$ and $c_i$ and that different $\col$-colorings (inducing different functional description) have the same scale. In this respect the homophily of the pair $(G,g)$ is an intrinsic measure of the same pair and the coefficient of variation of $\omega_i$ is an invariant of the pair $(G,\col)$. We can thus answer the question ``how homophilic the network is?" without resorting to comparisons with other networks.

\section{Assessing and measuring homophily}\label{sec:measure}
In this section we reap the crops of the last theorem by devising a methodological recipe  to assess and measure homophily in networks. The main tools in this respect are the \zscore s computed in the previous section. Given a pair $(G,g)$ consisting of a network and one of its functional description $g$---a partition of the node-set of the network into $s$ classes of nodes having the same characteristic, e.g.\ age, marital status, biological function, kind of phone subscription, geographical localization etc.---we can define the $s\times s$ random matrix $\mathbf{D}$ whose $i,j$-th entry is the standardized random variable $(M^{i,j}-\mbar_{i,j})/\sigma_{i,j}$ and, analogously, the $s$-dimensional random vector $\mathbf{d}_0$ whose $i$-th entry is the random variable $(L^i-\Ea{L^i})/\sqrt{{\rm var}(L^i)}$---notice that $\mathbf{D}$ is symmetric because $M^{i,j}$ and $M^{j,i}$ are the same variable. From $(G,g)$ we can compute the arrays $\mathbf{Z}$ and $\mathbf{z}_0$ consisting, respectively, of the \zscore s of intra- and inter-community edges (with the former displayed on the main diagonal of the $s\times s$ matrix $\mathbf{Z}$) and of the $s$ \zscore s of the $i$-isolated nodes (nodes of color $i$ none of whose neighbors has color $i$), for $i=1,\ldots,s$. We refer to $\mathbf{Z}$ and $\mathbf{z}_0$ as the \emph{\zscore\ arrays of $(G,g)$}. Hence we may think of $\mathbf{Z}$ and $\mathbf{z}_0$ as the observed values of $\mathbf{D}$ and $\mathbf{d}_0$, respectively---notice that $\mathbf{Z}$ is symmetric as well. For an array $\mathbf{A}$ (matrix or a vector) denote by $1/\mathbf{A}^2$ the array of the same dimensions as $\mathbf{A}$ whose generic entry $b$ is $a^{-2}$, $a$ being the corresponding entry of $\mathbf{A}$. Call the arrays $1/\mathbf{Z}^2$ and $1/\mathbf{z}_0^2$ \emph{$U$-values arrays}. By \v{C}eby\v{s}\"{e}v inequality, the $U$-values arrays give (entry-wise) an upper bound of the probability of observing a value at least as extreme as the one observed for the corresponding random variable. Hence $U$-values are upper bounds of the corresponding $p$-values---so called in the Theory of statistical hypotheses. Although $U$-values arrays:
	\begin{itemize}
		\item  do not capture the statistical dependency structure of the corresponding random arrays---this subject deserves further research;
		\item do not ensure a tight approximation of the corresponding $p$-values: though using only second order moments \v{C}eby\v{s}\"{e}v bounds are undoubtedly the best possible bounds, such bounds can be actually rather loose yielding (possibly) too conservative methods (especially in conjunction with the pervious point),
	\end{itemize}
	$U$-values arrays certainly exhibit the following merits:
	\begin{itemize}
		\item robustness: $U$-values do not require distributional assumptions and therefore have an endogenous nature;
		\item complexity: $U$-values can be efficiently computed (see Section \ref{sec:paolone});
		\item rigour: $U$-values are computed exactly and do not require sampling or estimates and have precise quantitative meaning for homophily.
	\end{itemize}
	Notice that the $U$-values arrays $(1/\mathbf{Z}^2,1/\mathbf{z}^2_0)$ and the \zscore s arrays $(\mathbf{Z},\mathbf{z}_0)$ convey the same statistical information. Hence $(\mathbf{Z},\mathbf{z}_0)$ is already a direct measure of the homophily of $G$ with respect to $g$. We spend the remainder of the section to substantiate this claim.

\paragraph{Descriptive power of \zscore\ arrays and comparisons of networks} The generic entry of $\mathbf{Z}=\{z_{i,j}\}$ measures the distance from the expected value of the corresponding random variables on a scale whose unit is the mean square error. At the same time, such an entry bounds from above the likelihood of this distance through the $U$-values, namely, the map $z_{i,j}\mapsto z^{-2}_{i,j}$. Similar considerations hold for the array $\mathbf{z}_0$.
	It follows that \zscore\ arrays can be conveniently described as heat-maps that provide a visual representation of homophily.  
	These kind of diagrams can be particularly useful when comparing different networks that use the same set of colors because all the arrays involved have the same dimensions and thus the corresponding heat-maps are comparable.
	This can be done for PPI networks, for instance, because they have the same functional description (see Section \ref{sec:ppi} and Section \ref{sec:results}). In this case one can also refine the analysis with the help of vector $\mathbf{z}_0$ to provide a measure of the concentration of homophily in each color class (however we did not pursue this idea numerically).

\paragraph{Multiple Testing} The natural extension of Park and Barabasi's method \cite{park2007} is the following procedure, which we present first in a scalar form to clarify the need for the Bonferroni correction and then in a more algebraic form to confirm the descriptive power of matrix $\mathbf{Z}$. Although in what follows, when dealing with hypothesis testing, it would be more appropriate to use one-sided \v{C}eby\v{s}\"{e}v inequality (a.k.a. Cantelli's inequality)---this amounts to consider $(1+z_{i,j}^2)^{-1}$ in place of $z_{i,j}^{-2}$---for simplicity we stick to the two-sided \v{C}eby\v{s}\"{e}v inequality.

\begin{equation}\label{algo}
\begin{minipage}[c]{370pt}
   \underline{\bf Procedure}. Given the pair $(G,g)$ fix a significance level $\alpha$. Compute the $\mathbf{z}$-scores arrays $(\mathbf{Z},\mathbf{z}_0)$. If $z_{i,i}\geq\frac{1}{\sqrt{\alpha}}$, then declare $G$ $i$-homophilic at level $\alpha$ (recall that $z_{i,i}=Z(\omega_i)$). Analogously, if $z_{i,j}\geq \frac{1}{\sqrt{\alpha}}$, $i\not=j$, then declare $G$ $(i,j)$-heterophilic at level $\alpha$ (recall that $z_{i,j}=Z(\eta_{i,j})$). Array $\mathbf{z}_0$ can be dealt with in the same way and can be used to refine the analysis.
  \end{minipage}
\end{equation}

%\cadre\label{algo} given the pair $(G,g)$ fix a significance level $\alpha$. Compute the \zscore s arrays $(\mathbf{Z},\mathbf{z}_0)$. If $z_{i,i}\geq\frac{1}{\sqrt{\alpha}}$, then declare $G$ $i$-homophilic at level $\alpha$ (recall that $z_{i,i}=Z(\omega_i)$). Analogously, if $z_{i,j}\geq \frac{1}{\sqrt{\alpha}}$, $i\not=j$, then declare $G$ $(i,j)$-heterophilic at level $\alpha$ (recall that $z_{i,j}=Z(\eta_{i,j})$). Array $\mathbf{z}_0$ can be dealt with in the same way and can be used to refine the analysis.  
%\endcadre

\mybreak
While the procedure above correctly assesses homophily (heterophily) of the marginal entries of $\mathbf{D}$, it is not true that the same significance level is valid for the joint distribution of $\mathbf{D}$. For assessing joint homophily (heterophily) we have to look at Procedure~\eqref{algo} as a \emph{multiple testing}  procedure  which therefore requires multiple testing corrections. One of such correction, the most conservative one, is Bonferroni's correction which, in its simplest form, scales level $\alpha$---the level below which the null hypothesis is rejected---by the reciprocal of the number $h$ of testing performed. For instance, suppose we want to assess whether a pair $(G,g)$ is jointly homophillic at level $\alpha$. Then we need to simultaneously test the $s$ diagonal elements of $\mathbf{D}$. In this case, Procedure~\eqref{algo} specializes by declaring that $(G,g)$ is $i$-homophilic when $z_{i,i}>\frac{s}{\sqrt{\alpha}}$. Clearly, as the number of testing increases, the procedure becomes too conservative especially in conjunction with \v{C}eby\v{s}\"{e}v bounds. This limitation is unavoidable without further information about the statistical dependence structure among the marginals of $\mathbf{D}$. Nonetheless, by using a slightly refined form of Bonferroni correction, we can still devise a method to measure homophily in a given network and to compare homophily between different networks that use the same set of colors. For $(i,j)\in [s]\times [s]$, with $i \not= j$, consider the alternative hypothesis $\mathcal{H}_{i,j}^1: D_{i,j}>0$ versus the null hypothesis $\mathcal{H}_{i,j}^0: D_{i,j}\leq 0$ at the significance level $\alpha_{i,j}$. Pair $(i,j)$ is said to \emph{positive at the significance level $\alpha_{i,j}$} whenever Procedure~\eqref{algo} accepts $\mathcal{H}^1_{i,j}$. More generally, for $Q\subseteq \{(i,j)\in [s]\times [s] \ |\ i \not= j\}$, the joint confidence level of the family of tests $\mathcal{H}^0_Q=\{\mathcal{H}^0_{i,j} \ |\ (i,j)\in S\}$---a.k.a \emph{family-wise error rate of the family of tests $\mathcal{H}^0_Q$}---is $\alpha=\min\{1,\sum_Q \alpha_{i,j}\}$ and  set $S$ is called \emph{positive at the joint significance level $\alpha$} whenever $\mathcal{H}_{i,j}^1$ is accepted by Procedure~\eqref{algo} for all $(i,j)\in Q$. The main observation is as follows. If we prescribe the individual significance level $\alpha_{i,j}=z_{i,j}^{-2}$ for $(i,j)\in Q$, then $S$ will be positive at the joint significance level $\min\{1,\sum_Q z_{i,j}^{-2}\}$. In particular, if $Q=\{(i,i) \ |\ i\in [s]\}$, then the set of diagonal positions of $\mathbf{Z}$, namely the positions of the \zscore s of the intra-community densities, is positive at joint confidence level given by the trace of the $U$-value array  $1/\mathbf{Z}^2$. This observation suggests that we can relate the number of positive elements in a set $Q$ at a significance level $\alpha$ with the sum of entries of $1/\mathbf{Z}^2$ indexed by $Q$. Indeed,
%$$\text{let $Q(\alpha)$ be the largest subset of $Q$ such that}\quad \sum_{(i,j)\in Q(\alpha)}z^{-2}_{i,j}\leq \alpha$$
$\text{let $Q(\alpha)$ be the largest subset of $Q$ such that}\quad \sum_{(i,j)\in Q(\alpha)}z^{-2}_{i,j}\leq \alpha$
and let $q(\alpha)$ be the cardinality of $Q(\alpha)$. Notice that $q(\alpha)$ can be 0. Hence $Q$ contains exactly  $q(\alpha)$ positive elements at the joint significance level $\alpha$. Parameter, $q(\alpha)$ depends only on $Q$, $\mathbf{Z}$ and $\alpha$ and therefore can be used to compare different networks that use the same set of colors. On the other hand, by definition, $q(\alpha)$ is related to $\mathbf{Z}$ by the following fact: for a real number $\lambda$, let $J(\lambda)=\{(i,j)\in [s]\in [s] \ |\ i\leq j \wedge z_{i,j}>\lambda\}$. It is clear that for each $\alpha$ there exists a $\lambda$ (not in general unique) such that $Q(\alpha)=J(\lambda)\cap Q$. Therefore, family $\{J(\lambda) \ |\ \lambda\in \mathbb{R}\}$ globally conveys the same information as family $\{q(\alpha) \ |\ \alpha\in [0,1)\}$ and we can get rid of the significance level $\alpha$ when comparing networks that use the same set of colors. Notice however that $\{J(\lambda) \ |\ \lambda\in \mathbb{R}\}$ conveys globally the same information as the heat-map of the \zscore\ matrix $\mathbf{Z}$ with the temperature acting as an inverse transform of the significance level. 

\paragraph{Synthetic measure via Multidimensional \v{C}eby\v{s}\"{e}v-type inequalities}
Multidimensional \v{C}eby\v{s}\"{e}v inequalities~\cite{ferentino} provide a somewhat dual method to the multiple testing procedure above. Recall that if $\mathbf{X}$ is a $d$-dimensional real random vector whose marginals have zero mean and unitary variance, $\|\mathbf X\|$ is the Euclidean norm of $\mathbf{X}$, and $t$ is a positive real number, then the following multidimensional \v{C}eby\v{s}\"{e}v-type inequality holds
$$\pra{\|\mathbf X\|\geq t}\leq \frac{d}{t^2}$$
by a straightforward application of Markov inequality to the random variable $\|\mathbf X\|^2$. The same inequality holds for matrices but replacing the Euclidean norm by the Frobenius norm and adjusting for dimensions. More generally, it holds by vectorializing any subset of entries of a given matrix (after adjusting for dimensions). For instance,  direct application of inequality above yields:
$$\pra{\|{\rm diag}(\mathbf{D})\|\geq \|{\rm diag}(\mathbf{Z})\|}\leq \frac{s}{\|{\rm diag}(\mathbf{Z})\|^2}\ \ ,$$
with ${\rm diag}(\mathbf{A})$ denoting the vector formed by the diagonal entries of the square matrix $\mathbf{A}$. Hence, the sum of the squares of the diagonal entries of $\mathbf{Z}$ gives a global synthetic measure of homophily: the higher such sum is the more globally homophillic the network is. 
Therefore, $$\max\left\{0, 1- \frac{s}{\|{\rm diag}(\mathbf{Z})\|^2}\right\}$$ is a global index of homophily lying in $[0,1]$, like Newman's modularity index~\cite{newman}.

\section{Numerical tests on real networks}\label{MM}
We now probe our theoretical results on two different network classes: i) Protein-Protein Interaction (PPI) networks, where nodes correspond to proteins, partitioned according to their functional role, and edges represent functional interactions between proteins ii) on-line social networks, where nodes correspond to users, partitioned according to their age, and edges represent friendship between users. 
As shown in the previous section, the major character of our methodology is the \zscore\ matrix $\mathbf{Z}$.
Let us discuss data and the running time of the method in some details before going to the numerical tests.

\subsection{Protein-protein interaction networks}\label{sec:ppi}
We consider ten PPI networks retrieved from STRING database (https://string-db.org/) \cite{szklarczyk2017,szklarczyk2019}, setting a high confidence score cut-off (0.70). 
The selected networks, listed in Table~\ref{taborganisms}, are mainly related to Bacteria (8 out of 10, belonging to diffent Phyla or classes), we also included in the study {\it Saccharomices cerevisiae} (Fungi - Ascomycota) and {\it Pyrococcus abyssi} (Euryarcheota - Thermococci) for comparison. 
The 8 bacterial organisms were chosen as representatives of Bacteria Kingdom, including different Phyla (Alpha, Gamma, Epsilon proteobacteria, Actinobacteria, Firmicutes/Bacilli, Spirochaetes). Organisms were also chosen on the basis of their network sizes (number of nodes and edges), in order to build an etherogeneous dataset.  
Species, Kingdom, Phylum/Class as well as number of nodes, number of edges, and density of the relative network are reported in Tab.~\ref{taborganisms} for each organism.
\begin{center}

\begin{table}[H]
\scriptsize
\begin{tabular}{|c|c|c|r|r|r|}
\hline
\multicolumn{3}{|c|}{\textbf{Organism}}&\multicolumn{3}{|c|}{\textbf{PPI network}}\\
\hline
\textbf{Species}&\textbf{Kingdom}&\textbf{Phylum/Class}&\multicolumn{1}{|c|}{\textbf{nodes}}&\multicolumn{1}{|c|}{\textbf{edges}}&\multicolumn{1}{|c|}{\textbf{{density}}}\\
\hline
{\it Brucella melitensis} \textbf{(Bm)} &Bacteria&Alphaproteobacteria&2,675&15,450&{0.43\%}\\
\hline
{\it Escherichia coli} \textbf{(Ec)} &Bacteria&Gammaproteobacteria&4,020&29,748&{0.37\%}\\
\hline
{\it Haemophilus influenzae} \textbf{(Hi)} &Bacteria&Gammaproteobacteria&1,609&9,202&{0.71\%}\\
\hline
{\it Helicobacter pylory J99} \textbf{(Hp)} &Bacteria&Epsilonproteobacteria&1,264&7,678&{0.96\%}\\
\hline
{\it Mycobacterium tuberculosis H37Rv} \textbf{(Mt)} &Bacteria&Actinobacteria&3,779&24,889&{0.35\%}\\
\hline
{\it Streptococcus pneumoniae TIGR4} \textbf{(Sp)} &Bacteria&Firmicutesi/Bacilli&1,811&8,813&{0.54\%}\\
\hline
{\it Treponema pallidum} \textbf{(Tp)} &Bacteria&Spirochaetes&894&8,157&{2.04\%}\\
\hline
{\it Vibrio cholerae} \textbf{(Vc)} &Bacteria&Gammaproteobacteria&3,153&20,844&{0.42\%}\\
\hline
{\it Pyrococcus abyssi} \textbf{(Pa)} &Euryarchaeota&Thermococci&1,564&9,090&{0.74\%}\\
\hline
{\it Saccharomyces cerevisiae} \textbf{(Sc)} &Fungi&Ascomycota/Saccharomycetes&6,157&119,051&{0.63\%}\\
\hline

\hline
\end{tabular}
\caption{Complete list of considered organisms, together with their network size (nodes and edges). Density is expressed as the ratio between the actual number of edges and the number of edges in the complete graph with the same number of nodes.}
\label{taborganisms}
\end{table}
\end{center}
\vspace{0.2truecm}
Functional classes of proteins of the considered ten organisms were obtained from NCBI database\\
(ftp://ftp.ncbi.nih.gov/pub/COG/COG/).
Proteins were partitioned into 25 different functional classes, but only 19  were taken into account in this work, since:
\begin{itemize}
\item 5 classes (A - RNA processing and modification, B - Chromatin structure and dynamics, Y - Nuclear structure, Z - Cytoskeleton, W - Extracellular structures) had no representatives (or only a few) for most of bacterial organisms;
\item  classes R -  general function prediction, and S - Function unknown,  were merged  into the X class.
\end{itemize}
The 19 considered classes are reported in Tab~\ref{tabfunction}. The number of proteins for each functional class in each organism is reported in the Appendix.

\begin{center}
\begin{table}[H]
\begin{tabular}{|c|l|}
\hline
\multicolumn{2}{|c|}{\textbf{INFORMATION STORAGE AND PROCESSING}}\\
\hline
%\textbf{Class}&\textbf{Description}\\
%\hline
J&Translation, ribosomal structure and biogenesis\\
\hline
%A&RNA processing and modification\\
%\hline
K&Transcription\\
\hline
L&Replication, recombination and repair\\
\hline
%B&Chromatin structure and dynamics\\
%\hline
\multicolumn{2}{|c|}{\textbf{CELLULAR PROCESSES AND SIGNALING}}\\
\hline
D&Cell cycle control, cell division, chromosome partitioning\\
\hline
%Y&Nuclear structure\\
%\hline
V&Defense mechanisms\\
\hline
T&Signal transduction mechanisms\\
\hline
M&Cell wall/membrane/envelope biogenesis\\
\hline
N&Cell motility\\
\hline
%Z&Cytoskeleton\\
%\hline
%W&Extracellular structures\\
%\hline
U&Intracellular trafficking, secretion, and vesicular transport\\
\hline
O&Posttranslational modification, protein turnover, chaperones\\
\hline
\multicolumn{2}{|c|}{\textbf{METABOLISM}}\\
\hline
C&Energy production and conversion\\
\hline
G&Carbohydrate transport and metabolism\\
\hline
E&Amino acid transport and metabolism\\
\hline
F&Nucleotide transport and metabolism\\
\hline
H&Coenzyme transport and metabolism\\
\hline
I&Lipid transport and metabolism\\
\hline
P&Inorganic ion transport and metabolism\\
\hline
Q&Secondary metabolites biosynthesis, transport and catabolism\\
\hline
\multicolumn{2}{|c|}{\textbf{POORLY CHARACTERIZED}}\\
\hline
X&Function unknown or general function prediction only\\
\hline

\end{tabular}
\caption{Protein functional classes, partitioned into higher categories.}
\label{tabfunction}
\end{table}
\end{center}
\vspace{0.2truecm}

Each organism's network is an undirected graph, in which each node represents a protein associated to a color denoting one of the functional classes listed in Table~\ref{tabfunction}, and each edge represents the interaction between two proteins, weighted according to the likelihood of the given interaction. A PPI graph is thus represented by two text files, the first lists node labels and the associated colors, the second lists edges as pairs of nodes and the associated weight in range $[0, 999]$. Edges have been cut-off at a 700 minimum weight, usually considered as a high confidence threshold. Isolated nodes in the resulting graph have been deleted. 
Some networks present a very limited number of nodes (some units) labeled by similar values (e.g.\ \texttt{jhp0681\_1} and \texttt{jhp0681\_2} in the node file for organism {\it Helicobacter pylori}) representing different isoforms of the same protein, but these nodes were simply denoted by a unique label (e.g.\ \texttt{jhp0681}) in the edge listing file. We merged such nodes in a single node; in the few cases in which they were associated to different functional classes, we merged them associating the functional class X to that node.
%Our experiments have been conducted on an Intel Core i5 machine with 4 cores, 2.3 GHz clock, 16 GB RAM, 256 KB L2 cache and 6 MB L3 cache, equipped with  MAC OS 10.14.6.

%All methods have been written in Python 3 and executed by a Python 3.7.2 interpreter, exploiting package {\it NetworkX}, version 2.5.

\subsection{Pokec social network}
Pokec is the most popular Slovak on-line social network.
Datasets, obtained during May 25-27 2012, are anonymized and contain relationships and user profile data of the whole network \cite{takac2012}. 
Friendships in the Pokec network are originally oriented. We decided to consider only symmetric pairs, so that we derived an undirected graph where nodes $x,y$ are adjacent if and only if both $x$ is a friend of $y$ and $y$ is a friend of $x$, so that it can be assessed that the two considered members had an actual interaction; also in this case, isolated nodes have been discarded. The network obtained contains more than one million nodes and 8 millions edges. Nodes are partitioned in classes according to the age declared by members, where about 34\% of them either did not declare age, or declared a patently untrue value---in some cases even less than 10 or over 100. So, we decided to put   into a ``fake'' age class denoted by $X$ all members whose age is not a numeric value in $[12,60)$.  The size of each subgraph induced by the 5 age classes, possibly containing isolated nodes, is shown in Table \ref{tab:pokec}, together with the size of the entire network.

%Pokec is the most popular Slovak on-line social network.
%Datasets, crawled during MAY 25-27 2012, are anonymized and contains relationships and user profile data of the whole network \cite{takac2012}. 
%Friendships in the Pokec network are originally oriented, only interaction between nodes with bidirectional edges were considered, so that it can be assessed that the two considered nodes had an actual interaction.

\begin{table}[H]

\begin{center}
%\small
\begin{tabular}{|c|c|r|r|}
\hline
\textbf{Class}&\textbf{Age}&\multicolumn{1}{|c|}{\textbf{Nodes}}&\multicolumn{1}{|c|}{\textbf{Edges}}\\
%\textbf{Class}&\textbf{Age}&\textbf{Nodes}&\textbf{Edges}\\
\hline
C&[12,18)&152,659&348,617\\
\hline
D&[18, 25)&332,826&2,038,089\\
\hline
E&[25, 40)&270,299&521,228\\
\hline
F&[40, 60)&46,295&23,156\\
\hline
X&Otherwise&410,270&949,026\\
\hline
\multicolumn{2}{|c|}{\textbf{whole network}} & \textbf{1,212,349} & \textbf{8,320,600}\\
\hline

\end{tabular}
\caption{Classes of Pokec social network. For each class the number of nodes is reported, with the number of edges joining nodes in the same class.}
\label{tab:pokec}

\end{center}
\end{table}
\vspace{0.2truecm}

\subsection{Implementation details}\label{sec:paolone}
We developed a Python 3 prototype implementing our model, source code is available at\\
\texttt{http://www.statistica.uniroma1.it/users/pfrancio/homophily/}\\
Experiments have been performed on an Intel Core i5 PC with 4 cores, 2.3 GHz clock, 16 GB RAM, 256 KB L2 cache and 6 MB L3 cache, equipped with  MAC OS 10.14.6. For the huge Pokec network, a 250 GB RAM machine running 18.04.5 LTS has been used.

Computing times, using a single core, are reported in Table \ref{tab:times}, excluding time elapsed in file I/O. 
As it clearly appears from the table, the ratio between the number of edges in the graph and the time needed to compute edge \zscore s is close to be constant (varying from 340k to 460k edges per second), confirming the asymptotic complexity $O(n+m)$---assuming the number of colors is constant.

An efficient computation of singleton \zscore s requires some more care. Expression for ${\rm var}(L^i)$ in point \ref{com:iii}) in Theorem~\ref{thm:main} requires $O(n^{3})$ time to be computed. Actually, it can be manipulated (details are discussed in the Appendix), so that the complexity of computing ${\rm var}(L^i)$ for each color $i$ is lowered to $O(nm)$. More precisely, its complexity is strictly related to the number of pairs of  nodes at distance 2, which in turns is bounded by $\pi_{3}$, i.e.\ the number of $P_{3}$'s in the graph. It is immediate to see that
$$\pi_{3} = \frac{1}{2}\sum_{v \in G}(\deg_{G}(v))^{\underline{2}}  \leq \frac{1}{2}\sum_{v \in G}(\deg_{G}(v))^{2}$$
The sum of squared degrees for all experimented networks is reported in Table  \ref{tab:times}, where it is confirmed to be proportional to  computing times for singleton \zscore s (with a ratio varying from 28k to 61k $P_{3}$'s per second).

\begin{center}

\begin{table}[H]
\small
\begin{center}

\begin{tabular}{|c|r|r|r|r|r|}

\hline

\multicolumn{1}{|c|}{\textbf{Network}}&\multicolumn{3}{|c|}{\textbf{size}}&\multicolumn{2}{|c|}{\textbf{computing time (seconds)}}\\

\hline

&\textbf{nodes}&\textbf{edges}&\textbf{squared degrees sum}&\textbf{edge \zscore}&\textbf{singleton \zscore}\\

\hline
\hline

\textbf{Bm} & 2,675&15,450&942,470&0.042&15.338\\

\hline

\textbf{Ec} &4,020&29,748&1,947,532&0.077&63.174\\

\hline

\textbf{Hi} &1,609&9,202&607,128&0.023&10.477\\

\hline

\textbf{Hp} &1,264&7,678&535,246&0.020&9.973\\

\hline

\textbf{Mt} &3,779&24,889&1,574,806&0.068&43.241\\

\hline

\textbf{Sp} &1,811&8,813&555,570&0.023&9.010\\

\hline

\textbf{Tp} &894&8,157&818,544&0.021&14.284\\

\hline

\textbf{Vc} &3,153&20,844&1,505,448&0.054&39.030\\

\hline

\textbf{Pa} &1,564&9,090&713,514&0.022&12.510\\

\hline

\textbf{Sc} &6,157&119,051&30,075,870&0.257&1,062.981\\

\hline

\textbf{Pokec}&1,212,349&8,320,600&752,382,968&24.270&24,086.467\\

\hline

\end{tabular}

\caption{Computing times for edge \zscore s and singleton \zscore s, on organisms and Pokec networks. For each network we  report the number of nodes, the number of edges and the sum of squared degrees. The complexity of singleton \zscore s computation strongly depends on the sum of squared degrees.}

\label{tab:times}

\end{center}

\end{table}

\end{center}

\vspace{0.2truecm}

\subsection{Numerical results}\label{sec:results}
In order to have a pictorial quantitative perception of homophily and heterophily in the considered networks, we present matrix $\mathbf{Z}$ of the \zscore s of the intra- and inter-community edges (see Section \ref{sec:measure}) in the form of heat-maps. Color scale is logarithmic on \zscore s, traslated in order to avoid negative values. Each entry of $\mathbf{Z}$ corresponds to a square in the diagram. Green squares corresponding to entry $i,j$ represent positive \zscore s, while pink squares represent negative \zscore s. 
Results related to PPI networks are shown in Figure~\ref{fig:heat-mapsPPI}. Homophily of PPI's with respect to their functional description is clearly readable from all the heat-maps by the green squares in all diagonals---showing the relative intra-community density---except for the poorly characterized X function class. A majority of off-diagonal \zscore s are negative (more than 79.6\%), while diagonal \zscore s tend to show very high values.
As a global result, neglecting all $i,j$ pairs where either
$i = \mathrm{X}$ or $j = \mathrm{X}$, we recap that:
\begin{itemize}
\item the average value of the diagonal entries $\mathbf{Z}$ is 36.26,
with standard deviation 49,85, ranging from a  -0.3183 minimum to a 326.6 maximum;
\item more than 91\% of diagonal entries $\mathbf{Z}$ are greater than 5;
\item the average value of off-diagonal entries $\mathbf{Z}$ is -0.836,
with standard deviation 3.707, ranging from a  -5.983 minimum to a 55.67 maximum;
\item more than 65\% of off-diagonal entries $\mathbf{Z}$ are less than -1.
\end{itemize}

\begin{figure}[H]
\centerline{
\includegraphics[scale=0.4]{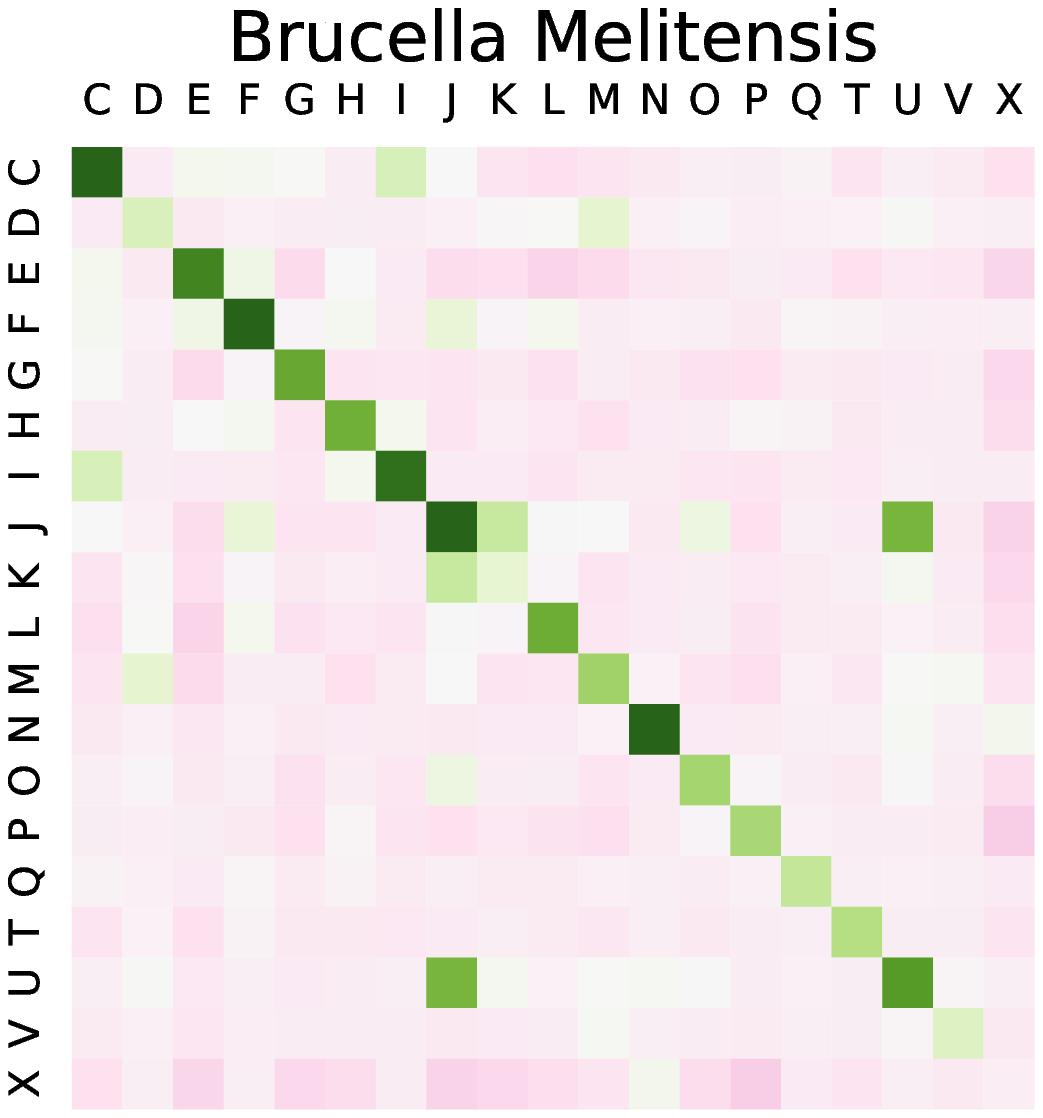}
\includegraphics[scale=0.4]{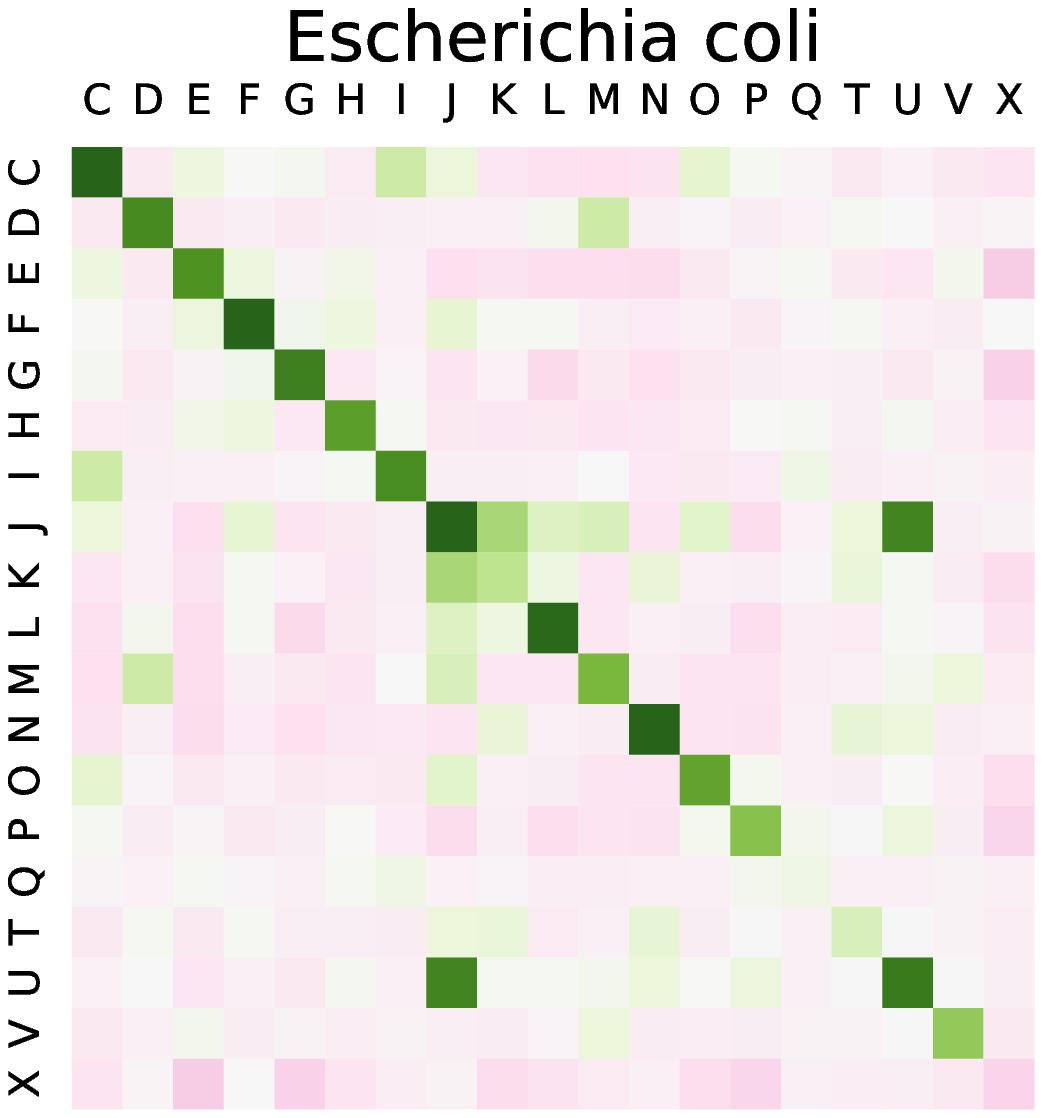}
\includegraphics[scale=0.4]{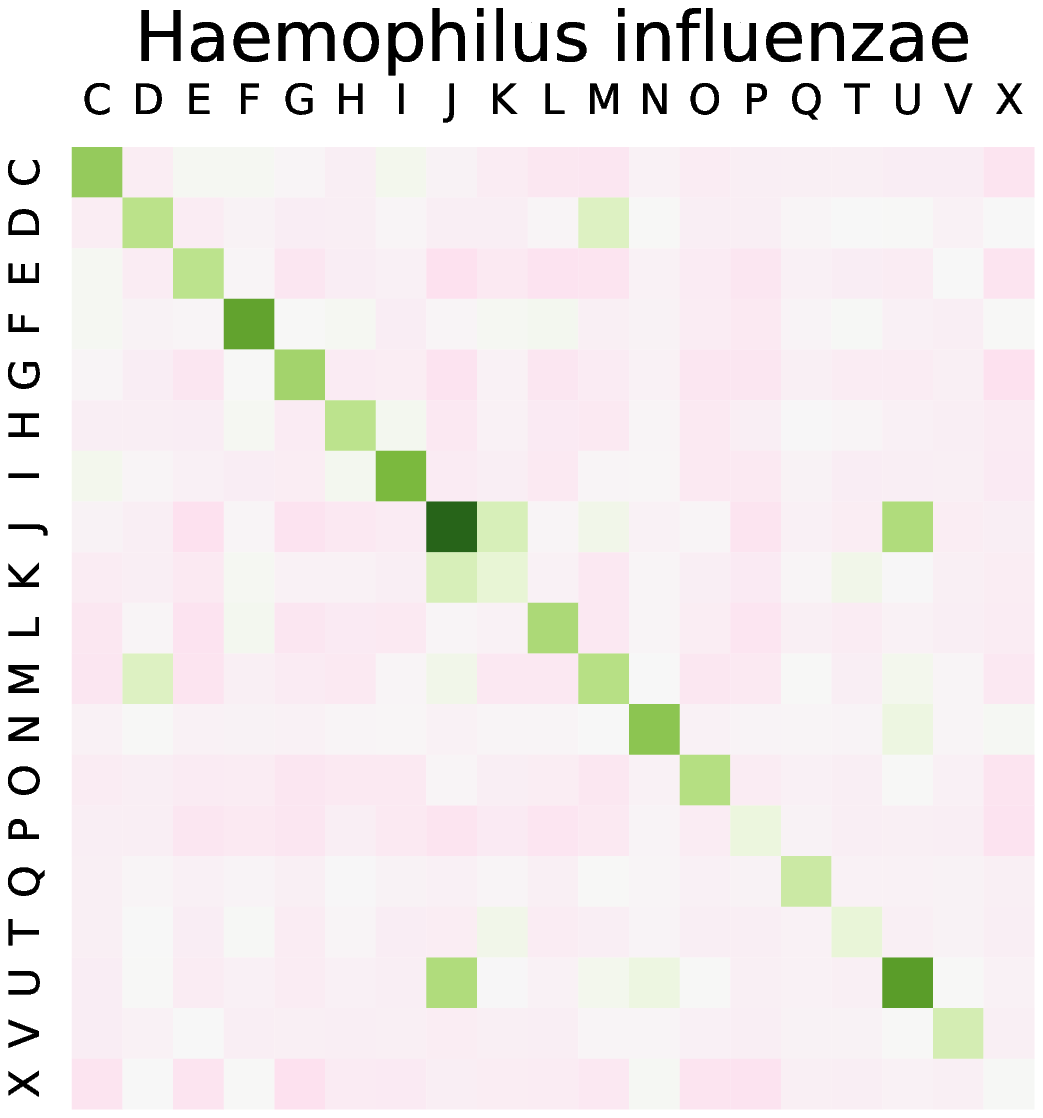}
\includegraphics[scale=0.3]{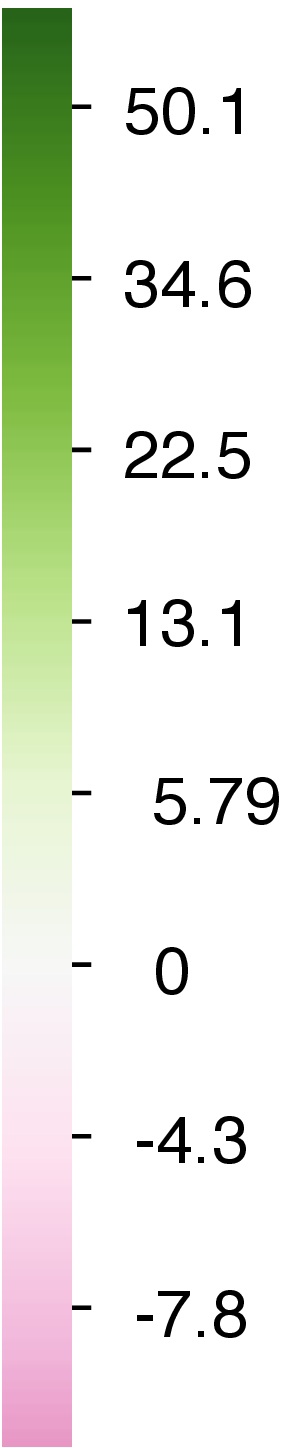}
}

\centerline{
\includegraphics[scale=0.4]{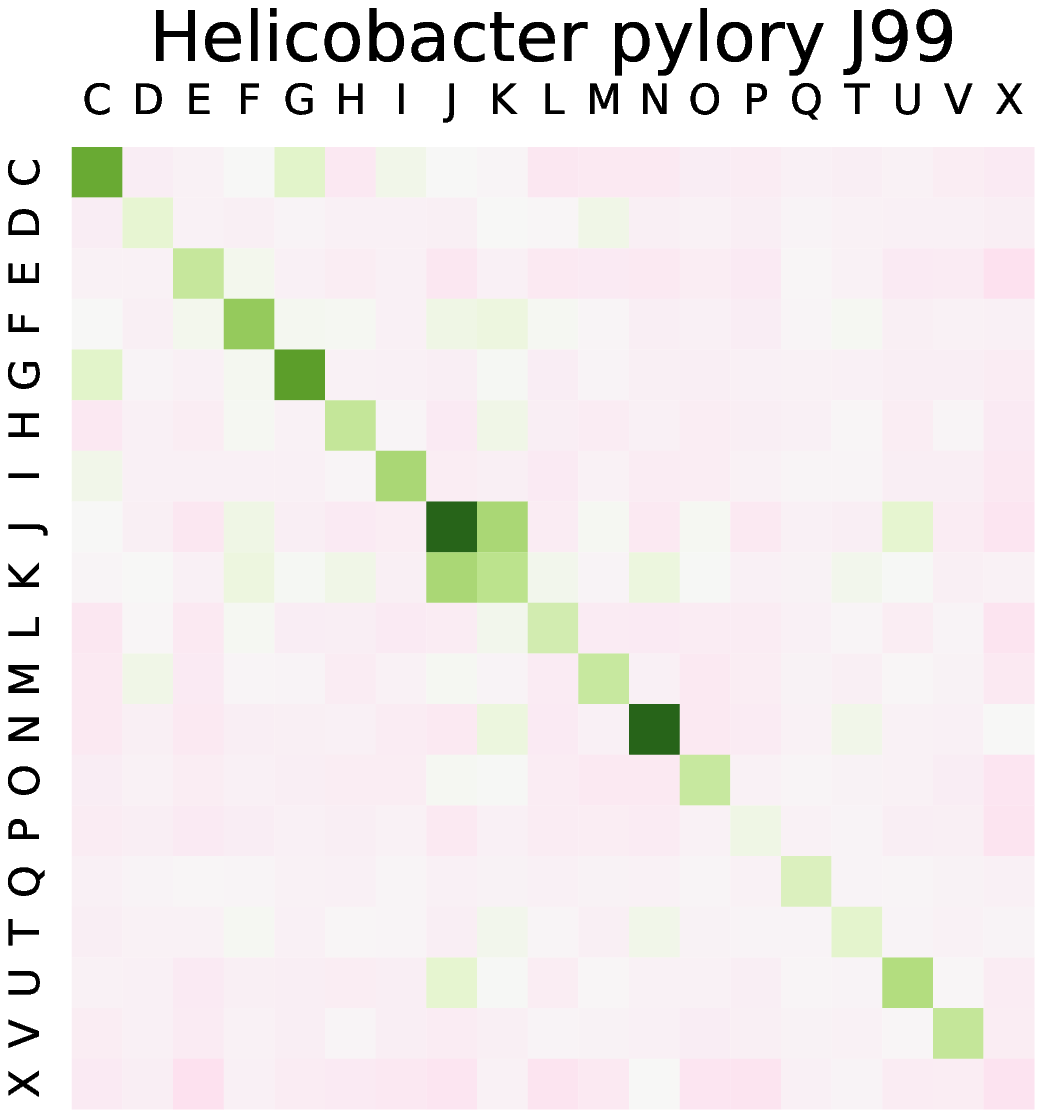}
\includegraphics[scale=0.4]{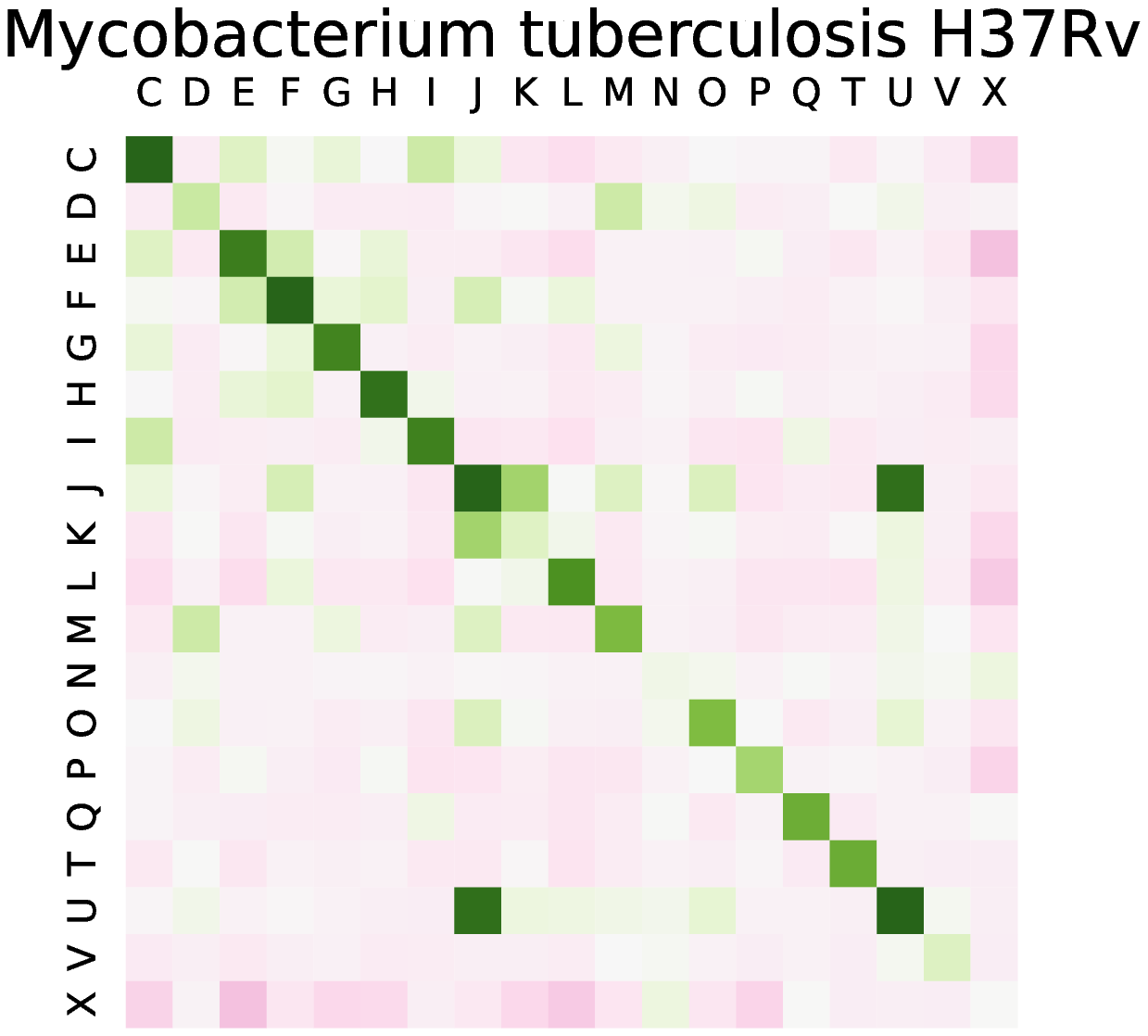}
\includegraphics[scale=0.4]{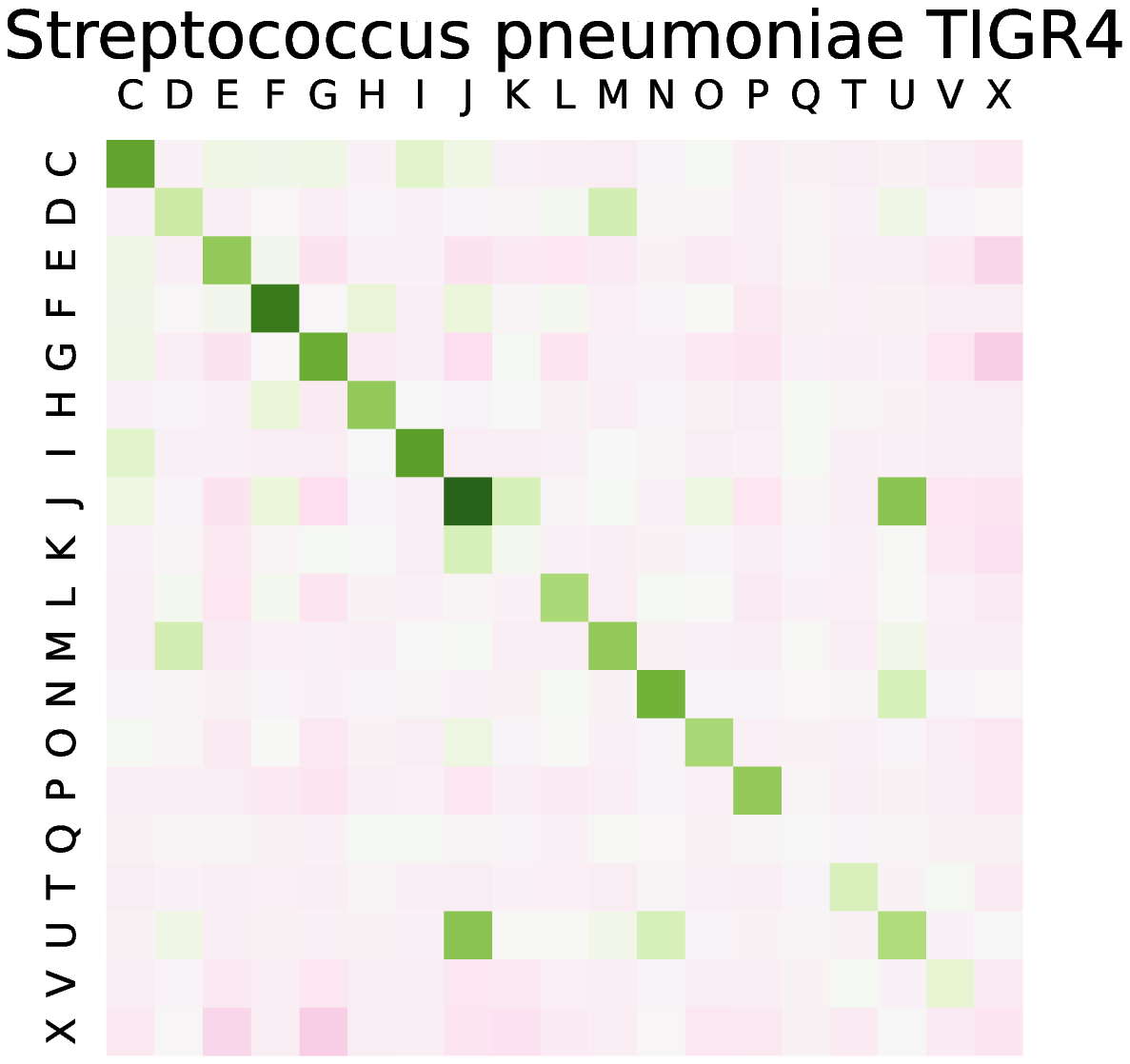}
\includegraphics[scale=0.3]{colorbar.eps}
}

\centerline{
\includegraphics[scale=0.4]{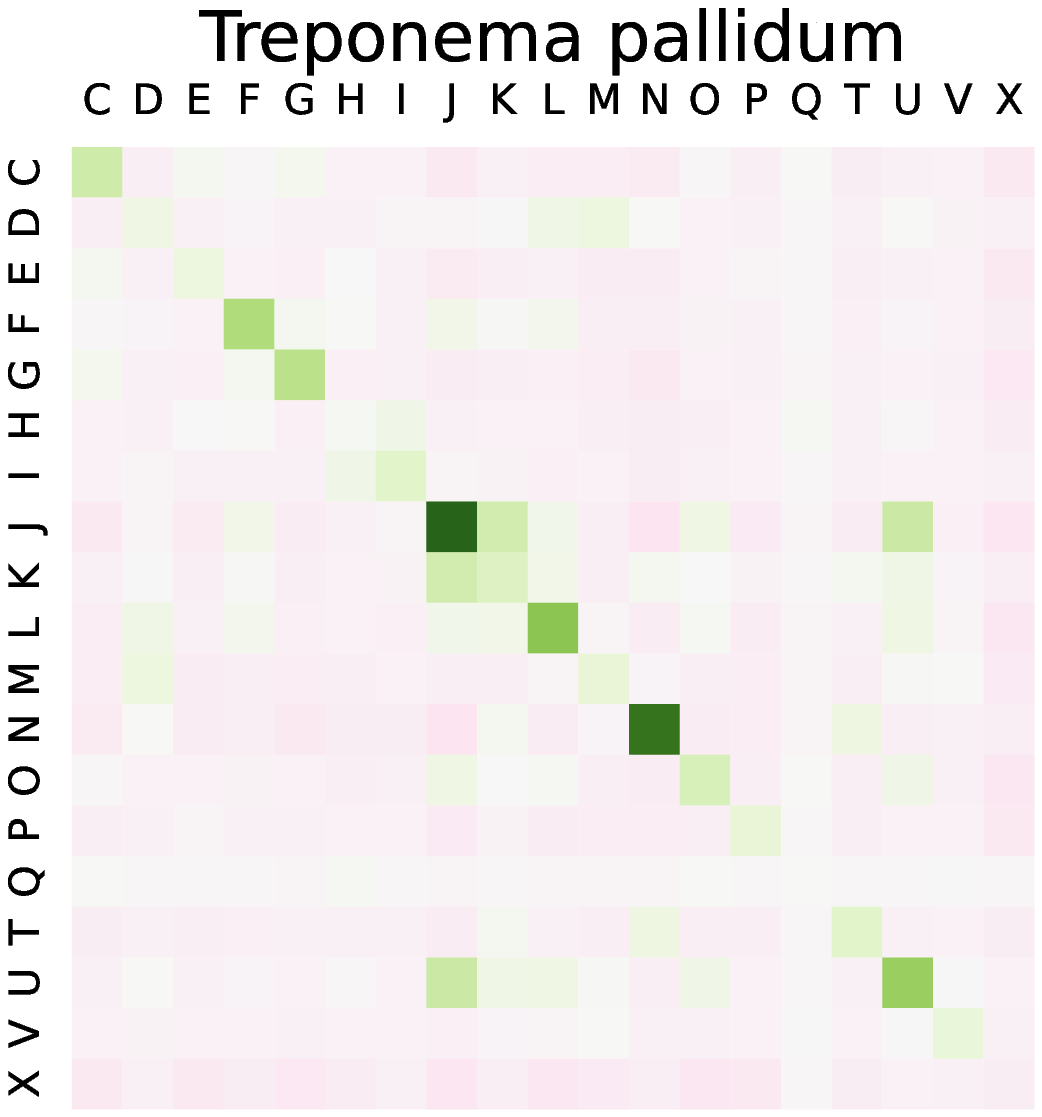}
\includegraphics[scale=0.4]{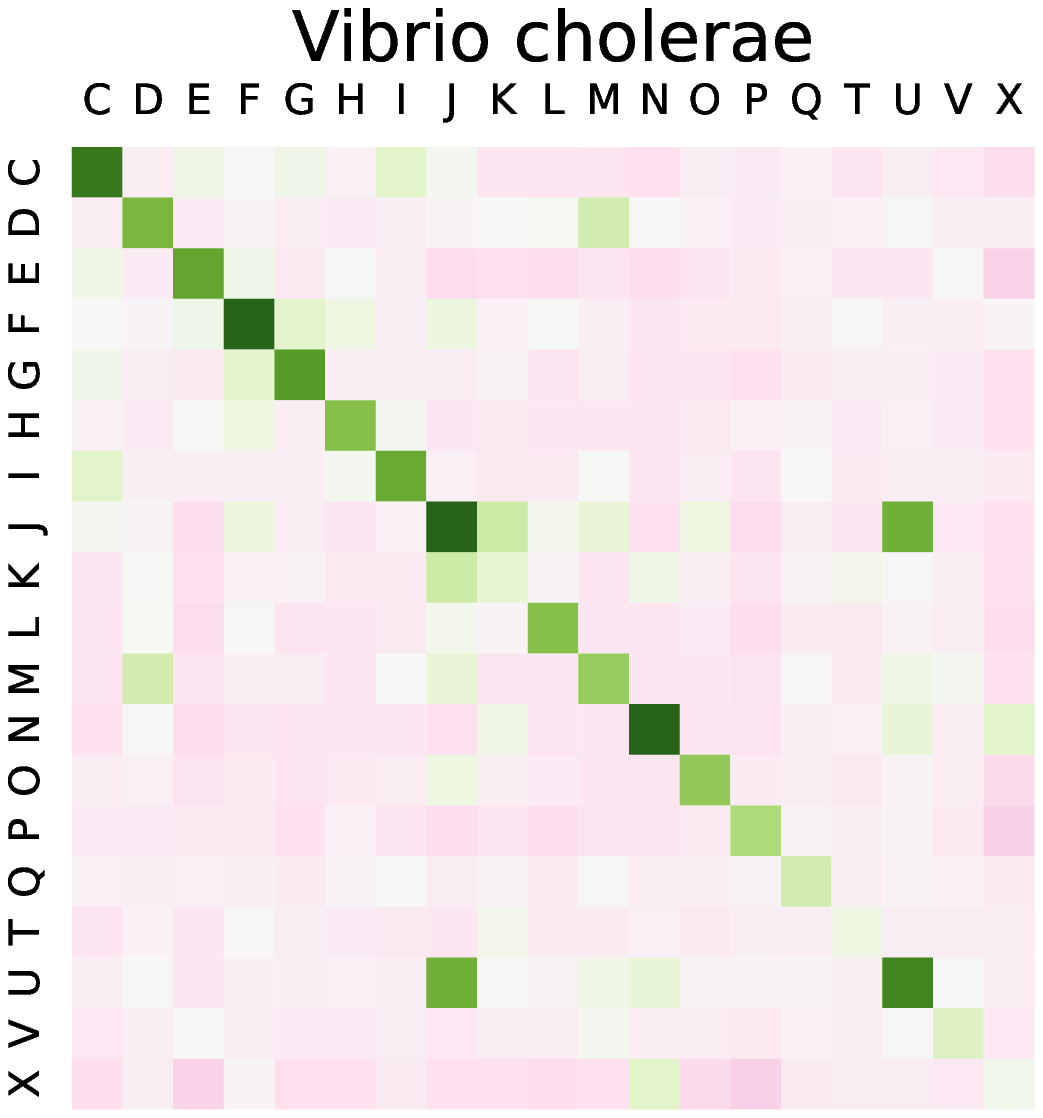}
\includegraphics[scale=0.4]{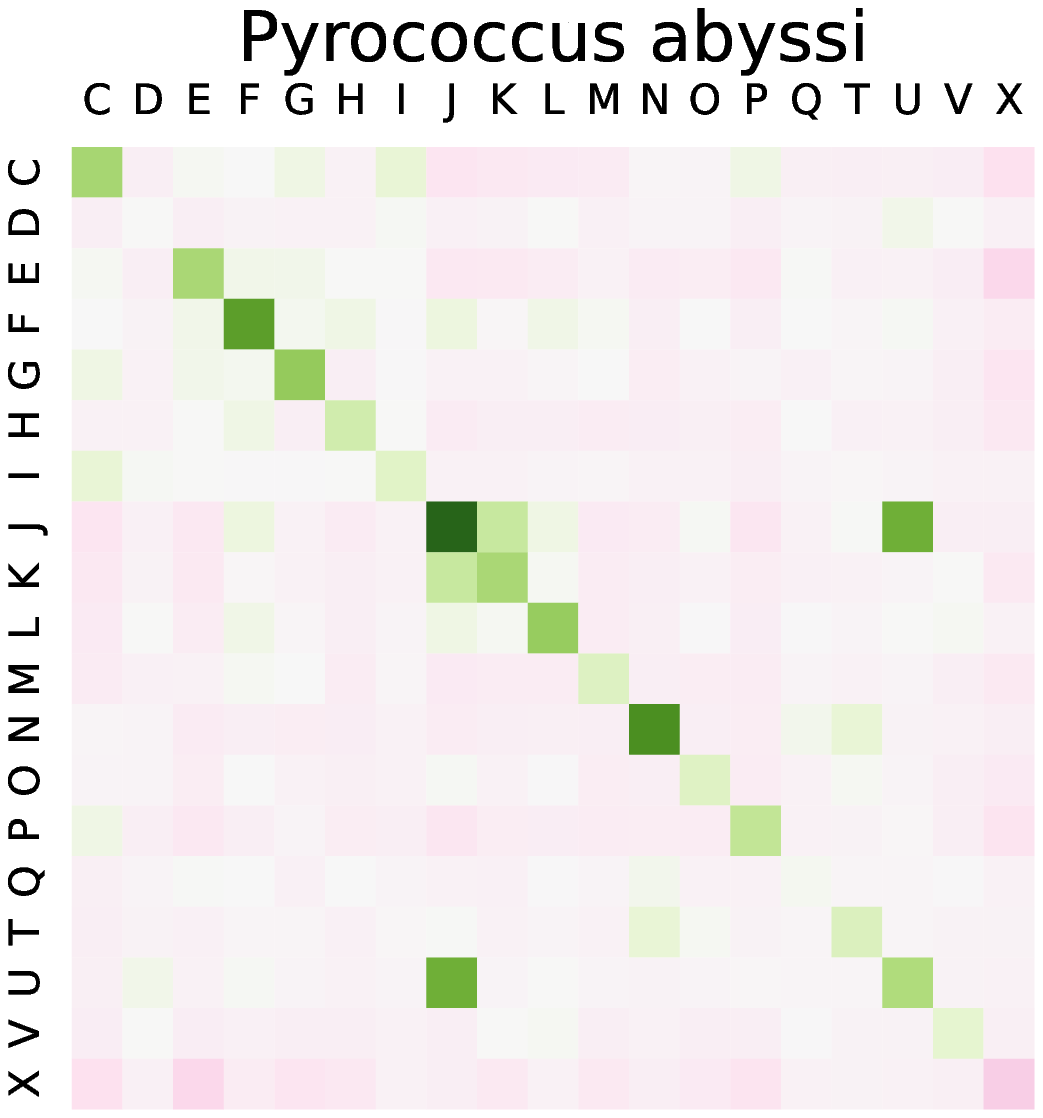}
\includegraphics[scale=0.3]{colorbar.eps}
}

\centerline{
\includegraphics[scale=0.4]{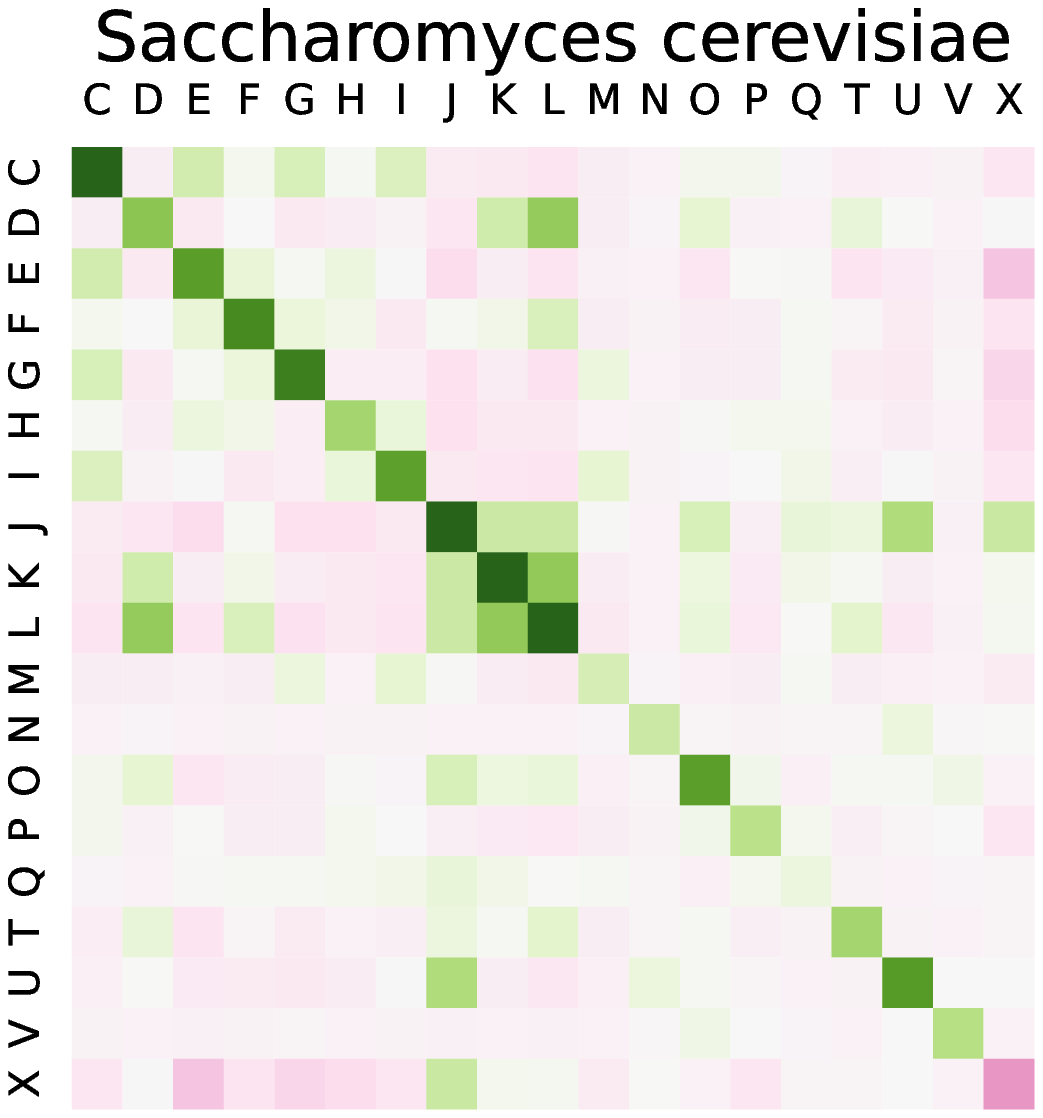}
\includegraphics[scale=0.3]{colorbar.eps}
}

%\caption{Heat-map plots showing $\omega_i$ (along the diagonal) and $\eta_{i,j}$ (for $i \not=j$) ratios for $i$ and $j$ belonging to all considered functional classes, are reported for the ten organism included in the study.}
\caption{Heat-maps corresponding to $\mathbf{Z}$ matrices of the ten organism PPI networks. Diagonal entries correspond to intra-community edges \zscore s, while off-diagonal  entries correspond to inter-community edges \zscore s. Values in the color scale have been cut to  interval $[-10, 60]$.}

\label{fig:heat-mapsPPI}
\end{figure}

%We apply the method described in \eqref{algo}. In details: the density and the number of isolated nodes were computed for the subgraphs induced by each given protein function class $i$ (as reported in Tab.~\ref{tabfunction}) in the PPI network of each organism $o$ (as reported in Tab.~\ref{taborganisms}.
%Those statistics were then compared with the expected ones through means and standard deviations, computed as stated in Theorem 1, providing the \zscore\ values:
  
%$$ Z(\omega_i) = \frac{ m_{i,i} - \mbar_{i,i}} { \sigma^2_{i,i} } $$ 
%$$ Z(L^i) = \frac{ L^i - \Ea{L^i}} { var(L^i) } $$

%For the sake of simplicity we will omit the index $o$ related to the given organism.
 
%In Fig.~\ref{figdensity} and Fig.~\ref{figsingleton} the \zscore s related to the density and the number of isolated nodes are reported respectively for each functional class (reported on the $x$-axis) and each organism (reported with different colors).

Concerning the off-diagonal entries of $\mathbf{Z}$ (namely, those corresponding to inter-community edges) it is worth noting that some classes show significant values, highlighting a unexpected heterophily although in most cases the associated classes belong to close functional classes such as class J, K and L, that can be grouped in the higher category {\it Information, storage and processing}.

In particular significant heterophilic \zscore s are reported, in most of the organism networks, for classes J-L and class J-U representing {\it Translation, ribosomal structure and biogenesis} (class J), {\it Replication, recombination and repair} (class L) and {\it Intracellular trafficking, secretion, and vesicular transport} (class U).
These heterophilic relationships can be considered reasonable from a biological point of view, since nodes associated to protein synthesis in the ribosome (class J) are related to nodes involved in DNA replication (class L) and also to intra-cellular transport (class U) according to the mechanics of protein biosynthesis (when DNA is transcribed, the resulting RNA copy is transported to the ribosome and after translation the protein can be transported away from the ribosome and onto the relevant part of the cell).
These results provide consistency to our work as a real-world validation of our method.

%Interestingly, another significant \zscore\ highlights heterophily in most of the organism networks with respect to classes J and U representing {\it Translation, ribosomal structure and biogenesis} (class J) and {\it Intracellular trafficking, secretion, and vesicular transport} (class U) respectively. 

To have a global and comparative glimpse of the whole scenario concerning PPI, we isolated the diagonal entries of $\mathbf{Z}$ and plotted them in Figure~\ref{fig:density} on a different scale.

%The \zscore s of the density are also reported in log-scale (\ref{figdensity_log}) in order to have a more clear view of the distribution of significant values (it is worth noting that negative values were not shown in log-scale plot).
A large majority of \zscore s (diagonal) shows very high values corresponding to extremely significant deviation from expected ones.
As expected, the exception regards last column related to X class ({\it Function unknown or General function prediction only}) showing \zscore\ values typically negative including very small values (-14 for {\it Saccharomyces cerevisiae}, -7 for {\it Pyrococcus abyssi} and -6 for {\it Escherichia coli}) with only two organisms showing positive values (0.44 for {\it Mycobacterium tubercolosis} and 1.9 for {\it Vibrio cholerae}) (Fig.~\ref{fig:density}).
This typical scenario is consistent with what we could expect from a biological point of view, since it is reasonable that proteins, envolved in a common task, could on average preferentially interact or be close to each other in the PPI.
Proteins belonging to X class do not share a common task since in most of cases they are not associated to any given functional class, so it is reasonable that they are not likely to interact with each other.
Some functional classes seem to show extremely high values, shared among almost all the organisms.
It is evident for class J ({\it Translation, ribosomal structure and biogenesis}) showing the highest values, reaching huge \zscore s (335 for {\it Escherichia coli}, 280 for {\it Mycobacterium tubercolosis}) always higher than 124.
Also class N (Cell motility) shows extremely high \zscore\ values reaching 229 for {\it Brucella mellitensis} and 206 for {\it Escherichia coli}, with the only exception of {\it Mycobacterium tubercolosis} - 2.47 - that is anyway more than two standard deviations greater than the expected one.  
Genes coding for proteins in bacteria are known to typically occur phisically close on chromosome, according to the operon paradigm, and it was shown, consistently with our findings (see \cite{santoni2009}), that especially genes coding for proteins envolved in translation and cell motility task are very close to each other, favoring their syncronous transcription and the interaction of their protein products.
%In general all classes shows significant values, in particular class F (Nucleotide transport and metabolism), G (Carbohydrate transport and metabolism), H (Coenzyme transport and metabolism), L (Replication, recombination and repair), O (Posttranslational modification, protein turnover, chaperones) and U (Intracellular trafficking, secretion, and vesicular transport) show \zscore\ values greater than 10 for all the organisms.
%Some exceptions concern class D (0.20 for Pyrococcus abyssi), K (1.06 for Streptococcus pneumoniae), Q (-0.43 for Streptococcus pneumoniae, -0.14 for Treponema pallidum and 0.87 for Pyrococcus abyssi), T (3.33 for Vibrio cholerae) and V (4.97 for Treponema pallidum).
 
\begin{figure}[H]
\centerline{\includegraphics[scale=1.4]{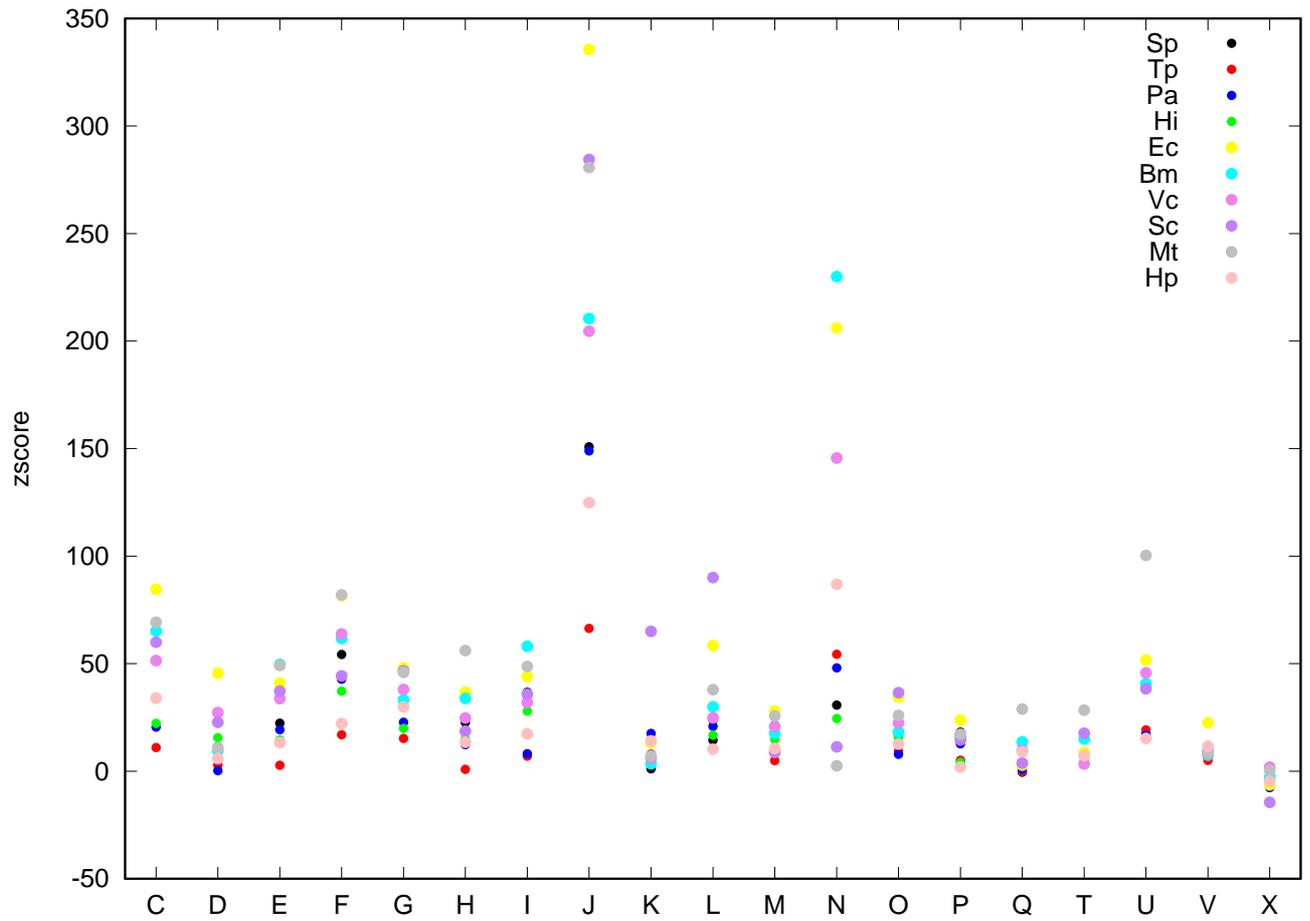}}
\caption{\zscore\ intra-community density values (diagonal entries of $\mathbf{Z}$) of each functional class ($x$-axis) are reported in different colors (each color representing a different organism as indicated in the top right legend of the plot).}   
\label{fig:density}
\end{figure}

%\begin{figure}[H]
%\centerline{\includegraphics[scale=1.4]{zscore_edges_log.eps}}
%\caption{Density log-scale}
%\label{figdensity_log}
%\end{figure}

As for the Pokec social network, results are presented in a completely analogous manner: see Fig.~\ref{fig:heat-map_pokec} for the heat-maps, while in (\ref{fig:density_pokec}) we isolated the diagonal elements. 

\begin{figure}[H]
\centerline{
\includegraphics[scale=0.6]{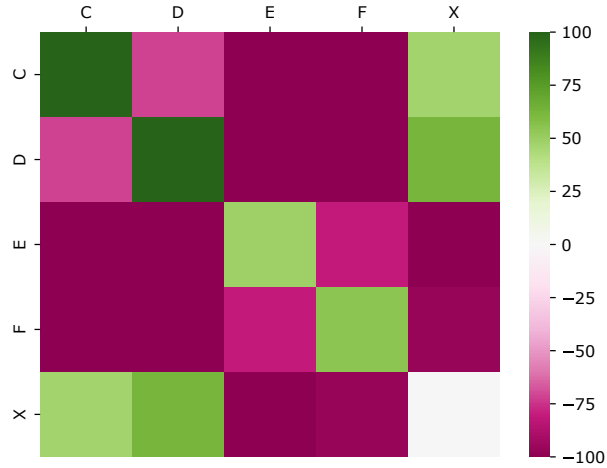}
}
\caption{Heat-map corresponding to $\mathbf{Z}$ matrix of Pokec social network. Values in the color scale have been cut to  interval $[-100, 100]$.}
\label{fig:heat-map_pokec}

\end{figure}

As expected Pokec shows a significant homophilic beahavior with respect to the considered node attribute, age class, as reported in Tab.~\ref{tab:pokec}.

All diagonal \zscore s, excepting class X (no age or non reliable value), reported in Fig.~\ref{fig:heat-map_pokec} and in Fig.~\ref{fig:density_pokec} show highly significant positive values, ranging from an astonishing value around 500 for class C ([12-18) years old) and around 200 for class D ([18-25) years old) till around 50 for classes E ([25-40) years old) and F ([40-60) years old).
Diagonal \zscore\ associated to class X is very close to 0, meaning that users that do not report their age (or report a non reliable age) do not interact with each other.
They prefer to have relationships with other users reporting an age belonging to class C and D (showing positive values in the heat-map Fig.~\ref{fig:heat-map_pokec}), while they do not interact with users belonging to class D and E. 
It can be hypothesized, if we trust in the homophilic nature of social network with respect to age, that most of those users (not reporting their age) have an age belonging to classes C and D.

\begin{figure}[H]
\centerline{\includegraphics[scale=0.9]{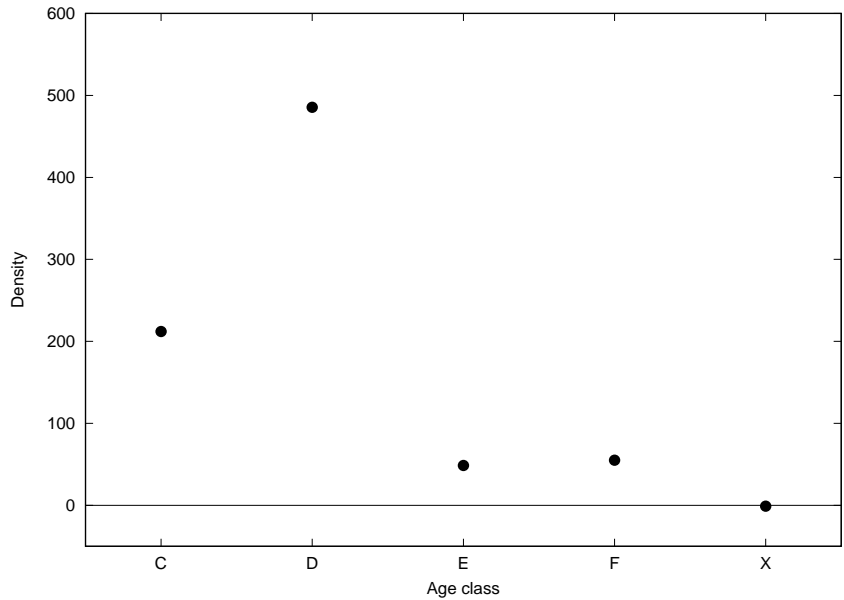}}
\caption{Diagonal \zscore\ values related to age class.}
\label{fig:density_pokec}
\end{figure}

To complement the analysis, we also computed vector $\mathbf{z}_0$. 
Recall that the $i$-th entry of such vector is the \zscore\ of the number of isolated nodes in the subgraph induced by color $i$ (functional class for the PPI and age class for Pokec). 
As explained in Section~\ref{sec:main}, although correlated with the intra-community densities (as confirmed for PPIs in Fig.~\ref{figcorrelation}: the higher the density, the lower the likelihood to find isolated nodes), the entries of $\mathbf{z}_0$ provides a measure of the concentration of the intra-community edges within color classes and, as expected, they are typically negative, consistently with what they represent.
A negative entry means that subgraph induced by the corresponding functional classes for the PPI and age class for Pokec contains less isolated nodes that expected. 
As can be observed in both Fig.~\ref{figsingleton} and Fig.~\ref{figsingleton_pokec}, except for the X class which shows a \zscore\ value close to zero for Pokec and few values close to zero and a vast majority of positive values for PPI, \zscore s associated to all other classes assume very low (negative) values (around 75\% of values are smaller than -5 in PPI - values around -140 for class C and D and around 80 and 60 for class E and F in Pokec), that can be considered extremely significant from a statistical point of view.

\begin{figure}[H]
\centerline{\includegraphics[scale=1.4]{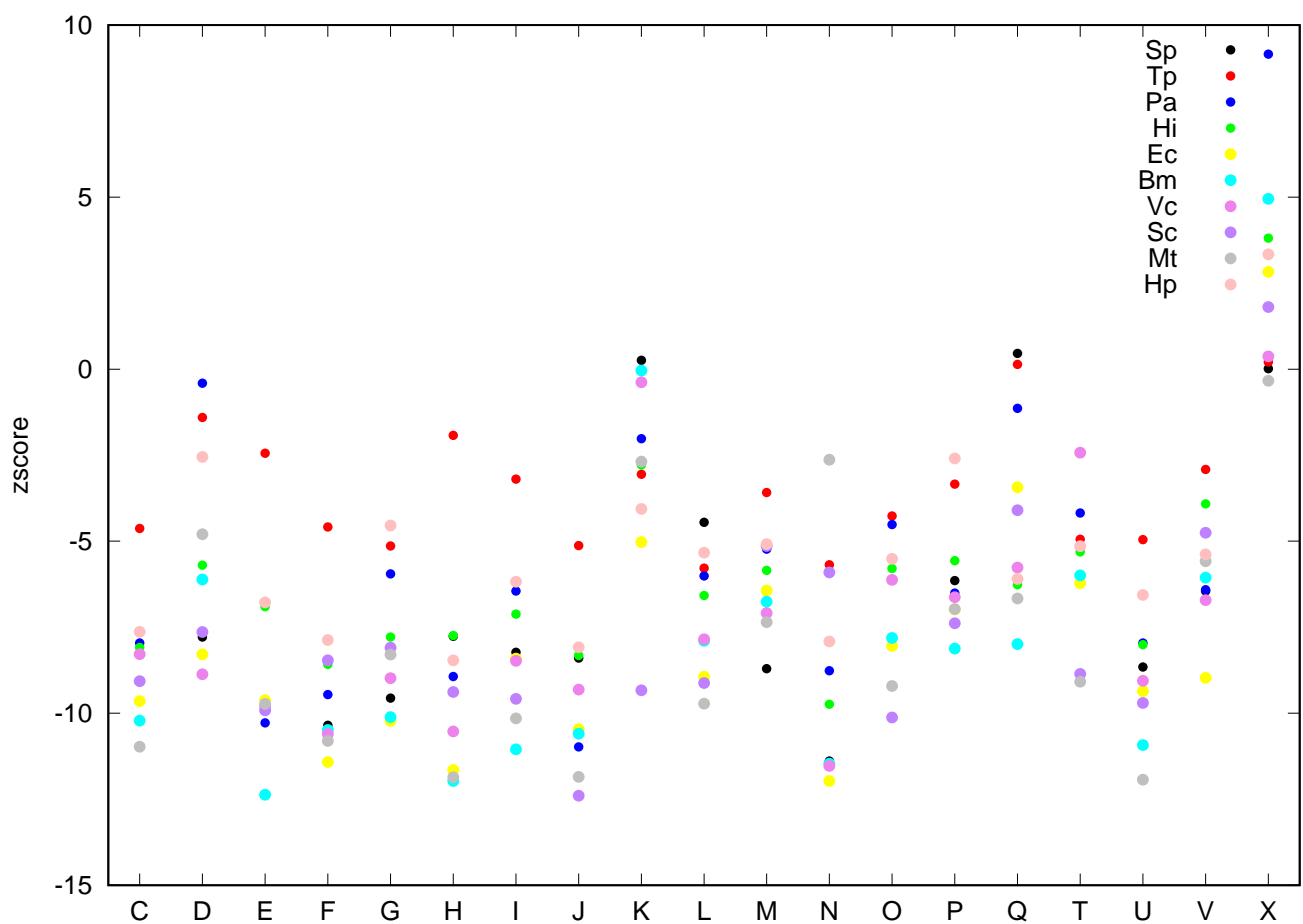}}

\caption{$\mathbf{z}_0$ values (y-axis) of each functional class (x-axis) are reported in different colors (each color representing a different organism as indicated in the top right legend of the plot).}
\label{figsingleton}
\end{figure}

\begin{figure}[H]
\centerline{\includegraphics[scale=0.9]{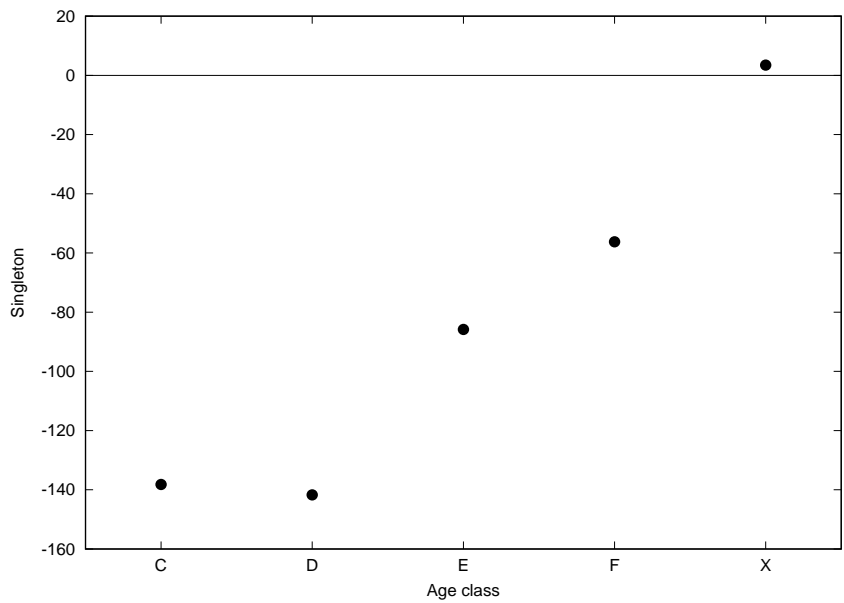}}

\caption{$\mathbf{z}_0$ values (y-axis) of each age class (x-axis) are reported.}
\label{figsingleton_pokec}
\end{figure}

\begin{figure}[H]
\centerline{\includegraphics[scale=0.9]{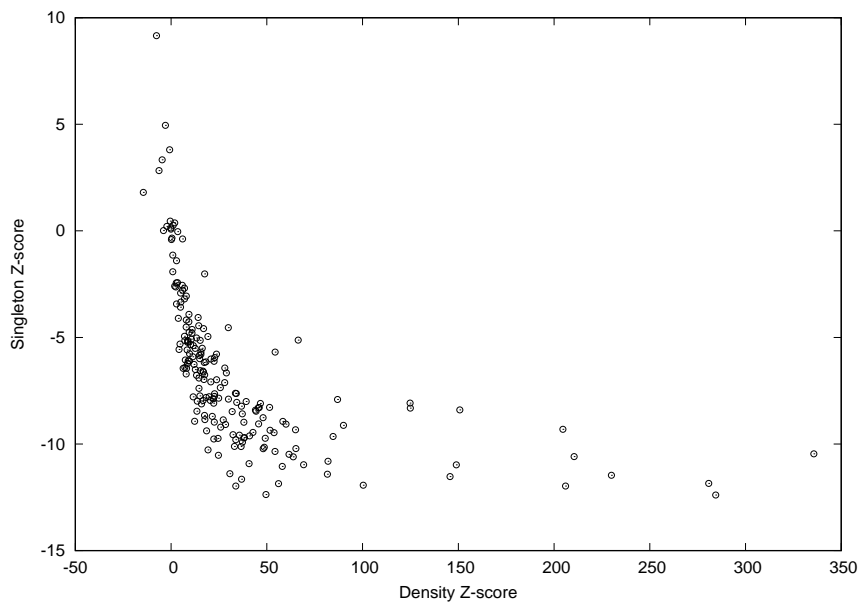}}

\caption{Correlation between diagonal $\mathbf{Z}$ values (x-axis) and $\mathbf{z}_0$ values (y-axis). Each point represents a given functional class of a given organism.}
\label{figcorrelation}
\end{figure}

Finally, as we said in Sections \ref{sec:main} and \ref{sec:measure}, the entries of the $p$-values arrays $1/\mathbf{Z}^2$ and $1/\mathbf{z}_0^2$ (obtained simply by squaring the reciprocal of the entries of the \zscore\ arrays)  can be rather loose estimates of the corresponding true quantiles. In this respect our method is rather conservative. Nonetheless, as shown in Fig.~\ref{figPvalue}, a large majority of $p$-values entries are under the threshold of 0.05, which is usually considered as reliable (for individual testing) with the exceptions already discussed above.

\begin{figure}[H]
\centerline{\includegraphics[scale=1.4]{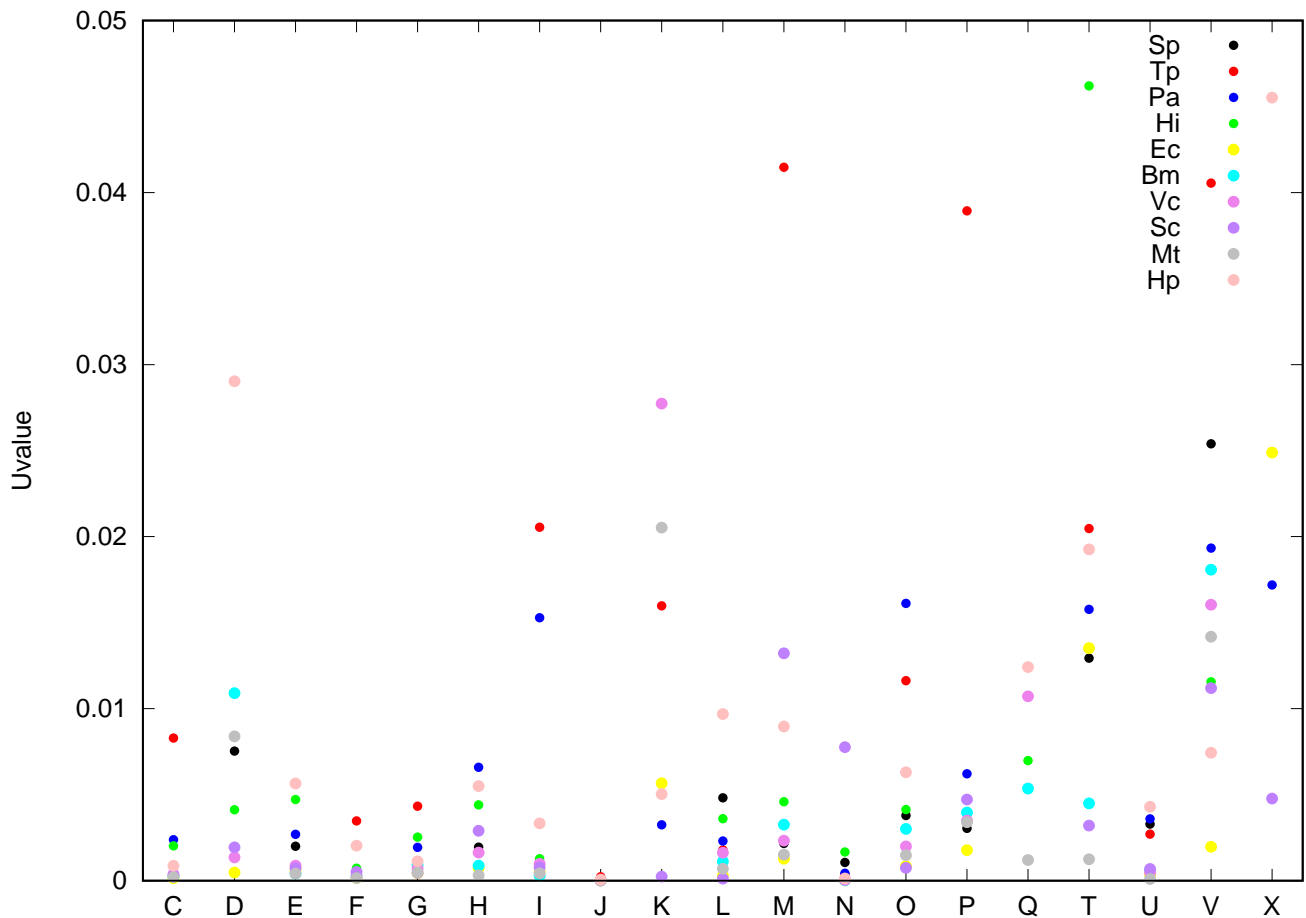}}

\caption{Diagonal entries of $U$-value arrays. The $x$-axis is labelled by functional classes and each color represents a different organism as indicated in the top right legend of the plot.}
\label{figPvalue}
\end{figure}

\section{Conclusions and discussion}\label{sec:danielsky}
In this paper we presented a new approach to assess and measure homophily in networks. The model, described in Section~\ref{sec:measure}, relies on computing
\begin{itemize}
\item the \zscore s of $m^{i,j}$ , the number of edges with one endpoint in functional class $i$ and the other endpoint in functional class $j$ (with possibly $i = j$),
\item the \zscore s of $l^{i}$, the number of nodes in functional class $i$ with no neighbours in class $i$,
\end{itemize}
under the hypothesis that these numbers are samples from the corresponding random variables $M^{i,j}$ and $L^{i}$ under the random coloring model  $(\Phi(\col),\pd_{n,\col})$ (the null model). These \zscore s are either directly interpreted as a refined measure of network homophily (through heat-maps) or serve as the basis either for more synthetic measure via multiple testing or via the significance level of the Euclidean distance between the observed intra-community densities and the expected ones under the random coloring model. The idea of random coloring is implicit in~\cite{park2007} from which we also borrowed terminology. As a result, we extended their model to an arbitrary number of colors and made it computationally efficient and also quantitative (via the \zscore).
The method is clearly applicable to any kind of network and to any of its functional description. Different networks with the same functional description can also be compared directly. Moreover, we noticed that the coefficients of variations of the $M^{i,j}$’s and $L^{i}$'s are invariant for the pair $(G, c)$, where $G$ is the network and $c$ is the profile of the functional description $g$ of $G$.

Obtained results provide evidence of the strong homophilic nature of PPIs, in terms of protein function, and of Pokec social network, in terms of age classes, making our method reliable and affordable since homophilic nature of PPIs and social networks is something expected and known to some extent.

Network homophily is directly linked to network communities and to the paradigm of Guilt By Association (GAS)~\cite{oliver2000}. According to this paradigm, attribute of a given node can be inferred by analyzing the attributes of its neighbours~\cite{deng2003,piovesan2015}.
%It was anyway highlighted how GAS requires a deep analysis, since information coming from neighbours has to be evaluted selecting those critical interactions that could have a major role in determining node attribute (referring to PPIs~\cite{gillis2012}).
In this view assessing and measuring network homophily can be extremely significant for the applicability of the GAS paradigm, allowing to classify nodes according to neighbor attributes.
The analysis of $\mathbf{Z}$ matrix in Pokec network can provide an example of how GAS paradigm can be concretely applied. Users belonging to $X$ class (age not reported or non reliable) are significantly close (according to the values of entries of $\mathbf{Z}$ matrix) to classes $C$ and $D$, showing an heterophilic behavior while they are not close to users of classes $D$ and $E$. This leads to hypothesize that users of class $X$ could have, even if they did not report it, an age associated to class $C$ or $D$.
It is worth noting  anyway that in some networks, in particular in PPIs, node attributes can be already classified through GAS paradigm, leading to a bias or to a tautological analysis, generating a circular argument.

Concerning PPI networks, comparison of $\mathbf{Z}$ matrices shows that the homophilic behavior is not linked to evident stronger similarity among close related species (also {\it Saccharomyces cerevisiae}  and {\it Pyrococcus abyssi} show similar homophilic/heterophilic \zscore s), so that homophilic behavior can be considered as an intrinsic characteristic of PPIs.
Interestingly, some functional classes are more associated than  expected showing an heterophilic behavior, especially classes  $J$, $K$ and $L$, that can be grouped in the higher category ``Information, storage and processing''. 
Another significant \zscore\ highlights heterophily in most of organism networks with respect to classes $J$ and $U$ representing ``Translation, ribosomal structure and biogenesis'' (class $J$) and ``Intracellular trafficking, secretion, and vesicular transport'' (class $U$) respectively. 

The model has been implemented in Python, and experimental results confirm that the computational complexity of the proposed model is optimal for edge density computation, requiring $O(n+m)$ time to compute the $\mathbf{Z}$ matrix. Computing the \zscore\ of the number of $i$-isolated nodes is more time consuming, requiring $O(nm)$ time, but experiments show that it is still efficient in practice for sparse large networks.

In conclusion we are confident that this work can provide a significant contribution allowing to assess and measure, through a robust statistical method, homophily in networks.

\section*{Acknowledgements}
The authors are grateful to an anonymous referee for suggesting a deep revision of the article aimed at putting our methodology at the center, for suggesting one of the methods in the new Section \ref{sec:measure}, and also for inspiring the entire section. We thank both referees for their thorough reading and suggestions that substantially improved the presentation. Finally, we thank Margherita Notarantonio for her help in  result visualization.

\section*{Appendix}\label{appendix}
\subsection*{Basic statistic tools}
In this section, after some preliminaries, we show how to compute probabilities of  events related to random $\col$-colorings and used throughout the paper.
\mybreak
For a positive integer number $a$ and a nonnegative integer number $r$, the \emph{$r$-th falling factorial of $a$} (also referred to as \emph{$r$-th falling power of $a$} in Knuth's terminology [14]) is the number:
$$\kn{a}{r}=\frac{a!}{(a-r)!}=\underbrace{a(a-1)\cdots (a-r+1)}_{\text{$r$ factors}}$$
and it counts the number of injective mapping from a set of $r$ elements into a set of $a$ elements. One has 
\begin{itemize}
	\item[--] $\kn{a}{r}=0$ if $r>a$;
	\item[--] $\kn{a}{0}=1$, $\kn{a}{1}=a$ and $\kn{a}{a}=a!$;
	\item[--] $\kn{a}{r+s}=\kn{a}{r}\kn{(a-r)}{s}$
\end{itemize}
The reason for Knuth's ``falling power" terminology in now clear. Let us come back to the definition of $\col$-coloring which we recall here: let $V$ be a set with $n$ elements and, for a positive integer $s$, let $\col=(c_1,\ldots,c_s)$ be a weak composition of $n$, namely an order sensitive non negative integer vector whose entries add up to $n$. A $\col$-coloring of $V$ is a surjective map $f: V\rightarrow [s]$ such that, for each $i\in[s]$ each \emph{color class} $f^{-1}(i)$ has exactly $c_i$ elements; $\col$ is the \emph{profile of $f$}. The \emph{multinomial coefficient} with \emph{parts} $c_1,c_2\cdots c_s$ 
$$\left( n \atop \mathbf{c}\right)=\left( n \atop c_1c_2\cdots c_s\right)=\frac{n!}{c_1!c_2!\cdots c_s!}$$
counts the $\col$-colorings of $V$. Indeed, the $c_1$ elements that are mapped to $1$ can be chosen in $\left(n\atop c_1\right)$, the elements that are mapped to $2$ can be chosen in $\left(n-c_1\atop c_2\right)$ among the remaining $n-c_1$. Continuing in this way and taking the product of these binomial coefficients we obtain the expression above. Note that, for $s=2$, the multinomial coefficient with parts $c_1$ and $c_2$ (with $c_2=n-c_1$), reduces to the binomial coefficient: 
$$\left(n\atop c_1c_2\right)=\left(n\atop c_1\right)=\left(n\atop c_2\right).$$
Also recall that the binomial coefficient $\left(n \atop r\right)$ is defined for any pair of positive integers $n$ and $r$ as follows,
\[\left(n \atop r\right)=
\begin{cases}
	\frac{n!}{r!(n-r)!} & \text{if $0\leq r\leq n$}\\
	0 & \text{otherwise}.
\end{cases}
\]
Let $J\subseteq [s]$. The \emph{contraction by $J$} of vector $\mathbf{c}=(c_1,\ldots,c_t)$ is the vector $\col'$ obtained from $\col$ by suppressing the entries whose indices are in $J$. We make use of the following multinomial identity which follows straightforwardly by the definition of the multinomial coefficient:
\begin{equation}\label{eq:mult_ident}
	\left( n \atop \mathbf{c}\right)=\left(\prod_{j\in J}\left( n \atop c_j\right)\right)\cdot \left( n-\sum_{j\in J}c_j \atop \mathbf{c}'\right),
\end{equation}
where $n$ and $\mathbf{c}$ are as above and $\col'$ is the contraction of $\col$ by $J$. For instance, if  $J=\{1\}$, then $\col'=(c_2,\ldots,c_s)$ and the expression above reads as 
$$
\left( n \atop \mathbf{c}\right)=\left( n \atop c_1\right)\cdot \left( n - c_1 \atop c_2\cdots c_s\right).$$

We now define the notion of random $\col$-colorings in some more depth. 
Let $\Phi(\col;V)$ be the set of all $\col$-colorings of $V$ (in our case $V=V(G)$ for some graph $G$). When $V$ is understood (as we have assumed throughout the paper) the notation is abridged into $\Phi(\col)$. Thus 
$$\Phi(\col)=\left\{f:V\rightarrow [s] \ |\ f\,\text{is surjective}\right\}.$$
We now equip $\Phi(\col)$ with the uniform measure $\pd_{n,\col}$
$$\pd_{n,\col}(f)=|\Phi(\col)|^{-1}=\left( n \atop \mathbf{c}\right)^{-1}$$
and define the \emph{random $\col$-coloring of $V$}, which we denote by $F$, as the dentity map on $\Phi(\col)$, namely the random $\col$-coloring of $V$ is essentially the probability space $(\Phi(\col),\pd_{n,\col})$ itself and it can be visualized as the random object $F$ taking the value $f\in \Phi(\col)$ with probability $\pra{F=f}=\pd_{n,\col}(f)$. A \emph{statistic based on the random $\col$-coloring $F$ of $V$} is simply any measurable function on $(\Phi(\col),\pd_{n,\col})$, for instance, the indicator $X^i_v$ of the event $(F(v)=i)$, for some $i\in [s]$ and $v\in V$, is one of such. Notice that the inverse image of event $(F(v)=i)$ is the set $\{f\in \Phi(\col) \ |\ f(v)=i\}$. This is the essence of our statistical model. 
\mybreak
For our purposes, for some two disjoint subset $A$ and $B$ of $V$ and some color $i\in [s]$, we are interested in the probability of the event that all the elements of $A$ have color $i$ while all those of $B$ have not. Let $\Omega_i(A,B)$ denote this event. Hence
\begin{equation*}
	\begin{split}
		\pra{\Omega_i(A,B)}&=\pra{\left(F(a)=i,\,\forall a\in A\right)\wedge\left( F(b)\not=i,\,\forall b\in B\right)}\\
		& =\left.\Big|\left\{f\in \Phi_\col \ | \ A\subseteq f^{-1}(i) \subseteq V\setminus B\right\}\Big|\middle/\left( n \atop \mathbf{c}\right)\right..
	\end{split}
\end{equation*}
We are also interested in computing the probability of the intersection of two such events for two distinct colors. We summarize these calculations in the next lemma and then we show how to use the lemma for computing the probability of certain simpler events.
\begin{lemma}\label{lem:fund}
	Let $A$, $A'$, $B$, $B'$ be subsets of $V$ and let $a$, $a'$, $b$, $b'$ be their respective cardinalities. Suppose $A\cap B=\emptyset$, $A'\cap B'=\emptyset$, $A\cap A'=\emptyset$ and $B\cap B'=\emptyset$ and let $b''=|B'\cap A|$. Then, for each two distinct colors $i$ and $j$, one has 
	\begin{equation}\label{eq:mainpr}
		\pra{\Omega_i(A,B)\wedge \Omega_j(A',B')}=\left\{\frac{\kn{c_i}{a}\kn{(n-c_i)}{b}}{\kn{n}{a+b}}\right\}\left\{\frac{\kn{c_j}{a'}\kn{(n-c_i-c_j)}{b'-b''}}{\kn{(n-c_i)}{a'+(b'-b'')}}\right\} 
	\end{equation}
\end{lemma}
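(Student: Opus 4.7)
The plan is to factor the joint probability using the chain rule, $\pra{\Omega_i(A,B)\wedge\Omega_j(A',B')}=\pra{\Omega_i(A,B)}\cdot\pra{\Omega_j(A',B')\mid\Omega_i(A,B)}$, and then to recognize each factor as a sampling-without-replacement probability of exactly the shape produced by equation~(\ref{eq:forisolates}), but applied to an appropriate urn. For the marginal factor, I would invoke~(\ref{eq:forisolates}) directly with $T=A\cup B$, which is a disjoint union of size $a+b$ by the hypothesis $A\cap B=\emptyset$. The event $\Omega_i(A,B)$ asserts that the color-$i$ elements inside $T$ are \emph{precisely} those of $A$; dividing $\pra{D^i_T=a}$ by the factor $\binom{a+b}{a}$ that counts which $a$ positions of $T$ receive color $i$ gives the first bracketed factor $\kn{c_i}{a}\kn{(n-c_i)}{b}/\kn{n}{a+b}$.

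For the conditional factor, the crucial step is to use the exchangeability of random $\col$-colorings to reduce the problem to a ``non-$i$ urn''. Conditionally on $\Omega_i(A,B)$, and in fact on any realization of the color-$i$ set $S_i=F^{-1}(i)$ compatible with $\Omega_i(A,B)$, the restriction of the coloring to $V\setminus S_i$ is uniform over colorings of that set with the reduced profile $\col'=(c_1,\ldots,\widehat{c_i},\ldots,c_s)$. Hence the $n-c_i$ non-$i$ balls can be treated as a fresh urn containing $c_j$ color-$j$ balls and $n-c_i-c_j$ balls of the remaining colors. I would then reinterpret $\Omega_j(A',B')$ inside this urn: the elements of $A\cap B'$ (of size $b''$) are forced to have color $i$ by $\Omega_i$ and are therefore automatically non-$j$, adding no constraint beyond $\Omega_i$---this is exactly why the exponent $b'-b''$ appears in the formula. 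Applying~(\ref{eq:forisolates}) a second time inside the reduced urn, with color $j$ in the role formerly played by color $i$, with $A'$ of size $a'$ in the role of the ``success'' set and with $B'_1:=B'\setminus A$ of size $b'-b''$ in the role of the ``failure'' set, then produces the second bracketed factor.

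The main obstacle is to justify the exchangeability reduction together with careful bookkeeping of the four overlap patterns allowed by the hypotheses. Two of them are immediately absorbed: the overlap $A\cap B'$ contributes the subtraction $b'-b''$ in the exponent, and the overlap $A'\cap B$ is harmless because any element there is forced by $\Omega_j$ to have color $j$, which is automatically non-$i$ as $\Omega_i$ demands, so the two constraints are compatible and yield no extra multiplicative factor. The remaining delicate point is to argue that after conditioning on $\Omega_i(A,B)$ the constraint ``$B$ is non-$i$'' does not leak into the marginal distribution of colors at $A'\cup B'_1$ beyond what is already captured by sampling $a'+(b'-b'')$ balls from the non-$i$ urn; the exchangeability claim above is precisely what rules this out. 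Assembling the two factors then delivers the identity~(\ref{eq:mainpr}).
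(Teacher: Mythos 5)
Your chain-rule factorization is essentially a probabilistic rephrasing of the paper's own proof, which counts the admissible colorings by first choosing the color-$i$ class, then the color-$j$ class inside its complement, then the rest; the first bracketed factor is obtained correctly either way. The genuine gap sits exactly at the point you yourself call ``delicate'', and exchangeability does not close it. Conditioning on a realization $S_i=F^{-1}(i)$ compatible with $\Omega_i(A,B)$ does make the coloring of $V\setminus S_i$ uniform with the contracted profile, but to apply the single-color formula inside that urn you need $A'$ and $B'\setminus A$ to be contained in $V\setminus S_i$. The stated hypotheses only give $A'\cap A=\emptyset$ and $B'\cap B=\emptyset$; any element of $A'$ or of $B'\setminus A$ lying outside $B$ may be swept into $S_i$ by the $c_i-a$ ``free'' draws, in which case $\pra{\Omega_j(A',B')\mid S_i}$ is $0$ (if $A'\cap S_i\neq\emptyset$) or carries a different exponent (if $(B'\setminus A)\cap S_i\neq\emptyset$). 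The conditional probability is therefore not constant over the admissible $S_i$, and the second factor of your product is not the claimed one.

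Moreover, no argument can close this gap, because the identity is false under the hypotheses as stated: the event $(F(u)=i)\wedge(F(v)=j)$ equals both $\Omega_i(\{u\},\emptyset)\wedge\Omega_j(\{v\},\emptyset)$ and $\Omega_i(\{u\},\{v\})\wedge\Omega_j(\{v\},\{u\})$, and both parameter choices satisfy the lemma's hypotheses, yet \eqref{eq:mainpr} returns $\frac{c_ic_j}{n(n-c_i)}$ for the first and the correct value $\frac{c_ic_j}{n(n-1)}$ for the second; these differ whenever $c_i\geq 2$. Your argument (like the lemma itself, and like the paper's counting proof, which makes the same tacit assumption) becomes correct under the stronger hypotheses $A'\subseteq B$ and $B'\subseteq A$, or $A'=B'=\emptyset$: then every element of $A'\cup(B'\setminus A)$ is deterministically excluded from $S_i$, the reduced-urn step is legitimate, and the conditional factor is the same for every admissible $S_i$. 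These stronger hypotheses hold in every application of the lemma made in the paper, so nothing downstream is affected, but your proof needs them to be added before the ``delicate point'' can be discharged.
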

\begin{proof}
	Since the elements of $A$ have to be mapped to $i$ and those of $B$ have not, the elements that have color $i$ can be chosen in $\left( n-(a+b) \atop c_i-a\right)$ ways. After this choice, we are left with $n-c_i$ elements that have to be assigned to $[s]\setminus \{i\}$ in such a way that all the elements in $A'$ must be mapped to $j$ and those in $B'$ cannot. Among the elements of $B'$ the are possibly some that have been already assigned to $i$. Therefore we can perform the choice in $\left( n-c_i-(a'+b'-b'') \atop c_j-a'\right)$ ways. After this choice has been done, we are left with $n-(c_i+c_j))$ elements that have to be assigned to colors in $[s]\setminus \{i,j\}$, namely with the number of $\col'$-colorings of a set of $n-(c_i+c_j))$ elements where $\col'$ is the contraction of $\col$ by $\{i,j\}$. If follows that 
	$$\pra{\Omega_i(A,B)\wedge \Omega_j(A',B')}=\frac{\left( n-(a+b) \atop c_i-a\right)\left( n-c_i-(a'+b'-b'') \atop c_j-a'\right) \left( n-c_i-c_j\atop \mathbf{c}'\right)}{\left( n \atop c_i\right)\left( n-c_i\atop c_j\right)\left( n-c_i-c_j\atop \mathbf{c}'\right)},$$
	where we used Formula \eqref{eq:mult_ident} at the denominator. One obtains Formula \eqref{eq:mainpr} after simplifying, expanding the binomial coefficients and resorting to the definition of falling factorial.
\end{proof}
The way we use the lemma to compute the probability of certain basic events is to read the events as a special case of the event $\Omega_i(A,B)\wedge \Omega_j(A',B')$ and to plug in the formula the corresponding parameters $a,b,\ldots b''$. Note that, for any $j\in [s]$, by choosing $A'=B'=\emptyset$ (amd $a'=b'=b''=0$ correspondingly) makes the event $\Omega_j(A',B')$ almost sure. Hence
\begin{equation}\label{eq:forisolates}
	\begin{split}
		\pra{\left(F(a)=i,\,\forall a\in A\right)\wedge\left( F(b)\not=i,\,\forall b\in B\right)}&=\pra{\Omega_i(A,B)}\\
		&=\pra{\Omega_i(A,B)\wedge \Omega_j(\emptyset,\emptyset)}\\
		&=\frac{\kn{c_i}{a}\kn{(n-c_i)}{b}}{\kn{n}{a+b}}=\frac{\kn{c_i}{a}}{\kn{n}{a}}\frac{\kn{(n-c_i)}{b}}{\kn{(n-a)}{b}}.
	\end{split}
\end{equation}  

By, taking $B=\emptyset$---and hence $b=0$--has the effect of suppressing the constraint $(F(b)\not=i,\,\forall b\in B)$. Therefore, for instance, 
\begin{equation}\label{eq:prmonset}\pra{F(a)=i ,\,\forall a\in A}=\pra{\Omega_i(A,\emptyset)}=\pra{\Omega_i(A,\emptyset)\wedge \Omega_j(\emptyset,\emptyset)}=\frac{\kn{c_i}{a}}{\kn{n}{a}}
\end{equation}
and, in particular, for any pair of elements $u,\,v\in V$ any color $i\in [s]$,
\begin{equation}\label{eq:pipi}
	\pra{F(u)=i}=\frac{\kn{c_i}{1}}{\kn{n}{1}}=\frac{c_i}{n}\quad\text{and}\quad\pra{(F(u)=i)\wedge (F(v)=i)}=\frac{\kn{c_i}{2}}{\kn{n}{2}}=\frac{c_i(c_i-1)}{n(n-1)}.
\end{equation}
Analogously, since for any pair of elements $u,\,v\in V$ and any two distinct colors $i,\,j\in [s]$, it holds that 
$$(F(u)=i)\wedge (F(v)=j)=\Omega_i(\{u\},\{v\})\wedge \Omega_j(\{v\},\{u\})$$
it follows that two compute the probability of such an event one has to put $a=b=a'=b'=b''=1$ in Formula \eqref{eq:mainpr} to obtain
\begin{equation}\label{eq:pipj}
	\pra{(F(u)=i)\wedge (F(v)=j)}=\frac{\kn{c_i}{1}\,\kn{c_j}{1}}{\kn{n}{2}}=\frac{c_ic_j}{n(n-1)}.
\end{equation}

\subsection*{Proof of Theorem 1}

\begin{proof}
The expected values $\mbar_{i,i}$ and $\mbar_{i,j}$, $i\not=j$ have already been computed. Let us prove the formula for the expected value of $L^i$. By definition $W_v^i$ is the indicator of the event $(F(v)=i)\wedge (F(w)\not=i,\,\forall w\in N_G(v))$, namely the event that $v$ has color $i$ while all of its neighbors have not. Thus, after (4),
%$$\Ea{W_v^i}=\pra{W_v^i=1}=\pra{X^i_v=1,D^i_{N_G(v)}=0}=\frac{c_i}{n}\pra{D^i_{N_G(v)}=0\Big|X^i_v=1}=\frac{c_i}{n}\left(\frac{\kn{(n-c_i)}{\deg_G(v)}}{\kn{(n-1)}{\deg_G(v)}}\right).$$
$$\Ea{W_v^i}=\pra{W_v^i=1}=\pra{X^i_v=1,D^i_{N_G(v)}=0}=\frac{c_i}{n}\pra{D^i_{N_G(v)}=0\Big|X^i_v=1}=\frac{c_i}{n} \cdot \frac{\kn{(n-c_i)}{\deg_G(v)}}{\kn{(n-1)}{\deg_G(v)}}.$$
Hence, by linearity of expectation 
\[
\mathbb{E}\left(L^i\right)=\frac{c_i}{n}\sum_{v\in V(G)}\frac{\kn{(n-c_i)}{\deg_G(v)}}{\kn{(n-1)}{\deg_G(v)}}.
\]
Let us compute the variance of the random variables in 1), 2) and 3). Observe that all such variables are sums of Bernoulli random variables, namely they are of the form $S=\sum_{\nu\in N}B_\nu$ where $N$ is a finite index set and $B_\nu$ is a Bernoulli random variable for each index $\nu\in N$. The variance of $S$ is thus given by
\begin{equation}\label{eq:varianzagen}
	\begin{split}
	\text{var}(S)&=\Ea{S^2}-(\Ea{S})^2=\Ea{\left(\sum_{\nu\in N}B_\nu\right)^2}-(\Ea{S})^2=\\
	&=\Ea{\sum_{\nu\in N}B_\nu}+\Ea{\sum_{\substack{(\nu,\nu')\in N\times N\\ \nu\not=\nu'}}B_\nu B_{\nu'}}-\left(\Ea{S}\right)^2=\\
	&=\Ea{S}\left(1-\Ea{S}\right)+\sum_{\substack{(\nu,\nu')\in N\times N\\ \nu\not=\nu'}}\Ea{B_\nu B_{\nu'}}=\\
	&=\Ea{S}\left(1-\Ea{S}\right)+\sum_{\substack{(\nu,\nu')\in N\times N\\ \nu\not=\nu'}}\pra{{B_\nu=1\wedge B_{\nu'}=1}}
\end{split}
\end{equation}
where we used the fact that $B_\nu=B_\nu^2$ and that $\Ea{B_\nu B_{\nu'}}=\pra{{B_\nu=1\wedge B_{\nu'}=1}}$. Let us first specialize the formula above to $M^{i,i}$ and $M^{i,j}$. Notice that in both cases $N=E(G)$ and that the summation set in the last equality of \eqref{eq:varianzagen} is $E(G)\times E(G)\setminus \{(e,e) \ |\ e\in E(G)\}$. Denote the latter set by $P$. Since two edges $e$ and $e'$ of $G$ can have at most one node in common, it follows that $P=Q\cup R$ where $Q=\left\{(e,e')\in P \ |\ e\sim e'\right\}$ and $R=\{(e,e')\in P \ | \ e\not\sim e'\}$ and where we have written $e\sim e'$ if $e$ and $e'$ share a node and $e\not\sim e'$ otherwise. Clearly $Q\cap R=\emptyset$. Therefore, if $S$ is either $M^{i,i}$ or $M^{i,j}$, the variance of $S$ is 
$$\text{var}(S) = \Ea{S}\left(1-\Ea{S}\right)+\sum_Q\pra{{B_e=1\wedge B_{e'}=1}}+\sum_R\pra{{B_e=1\wedge B_{e'}=1}}.$$
It is clear that $\pra{B_e=1\wedge B_{e'}=1}$ assumes only two values over the set $P$: it assumes the value $a$ on $Q$, and the value $b$ on $R$. Moreover,
since $|P|=(m^2-m)=2\left(m \atop 2\right)$ and since $e\sim e'$ if and only if $e$ and $e'$ spans a $P_3$, it follows that
$$|Q|=2\pi_3(G)=2\sum_{v\in V(G)}\left( \deg_G(v) \atop 2 \right)\quad\text{and}\quad |R|=2\left(m \atop 2\right)-2\pi_3(G).$$
Therefore, the variance of $S$ assumes the following form 
\begin{equation}\label{eq:duevarianze}
\text{var}(S)=\Ea{S}\left(1-\Ea{S}\right)+2\left[\pi_3(G)(a-b)+ \left(m \atop 2\right)b\right].	
\end{equation}
We obtain expressions for the variance of $M^{i,i}$ and $M^{i,j}$ by plugging the expectation of the corresponding variable in the formula above and specializing $a$ and $b$ for $B_e=Y_e^{i,j}$ and $B_e=Y_e^{i,j}$, with $e=uv$ for some nodes $u$ and $v$.  
\mybreak
Let us start with $a$, namely, the value of $\pra{B_e=1\wedge B_{e'}=1}$ when $(e,e')\in Q$. Hence $e\sim e'$. After regarding edges as sets of two nodes, one has $e\sim e'$ if and only if $|e\cup e'|=3$ (recall that the graph is loopless and has no parallel edges). Let $e\cup e'=\{u,v,w\}$ where $u$ is the unique node in $e\cap e'$.
Recall that for disjoint subsets $A$ and $B$ of $V(G)$ we denote by  
$\Omega_i(A,B)$ the event that all the nodes of $A$ have color $i$ while all those of $B$ have not.
Now, if $S=M^{i,i}$, then $B_e=Y^{i,i}_e$ for all $e\in E(G)$, and thus $a=\pra{\Omega_i(\{u,v,w\},\emptyset)}$; else, if $S=M^{i,j}$, then $B_e=Y^{i,j}_e$ for all $e\in E(G)$; in this case observe $a$ is the sum of the probability of two mutually exclusive events: the event that $u$ has color $i$ while the nodes in $\{v,w\}$ have color $j$, namely the event $\Omega_i(\{u\},\{v,w\})\wedge \Omega_j(\{v,w\},\{u\})$, and the the event that $u$ has color $j$ while the nodes in $\{v,w\}$ have color $i$, namely the event $\Omega_j(\{u\},\{v,w\})\wedge \Omega_i(\{v,w\},\{u\})$. Therefore, by Lemma \ref{lem:fund}, one has
\[
a=\begin{cases}
	\frac{\kn{c_i}{3}}{\kn{n}{3}} & \text{if $B_e=Y_e^{i,j}$}\\
	\frac{c_i\kn{c_j}{2}+\kn{c_i}{2}c_j}{\kn{n}{3}} & \text{if $B_e=Y_e^{i,j}$}
\end{cases}.
\] 
Let us compute $b$. In this case $e\cap e'=\emptyset$. Let $e=uv$ and $e'=u'v'$. If $S=M^{i,i}$, then $B_e=Y^{i,i}_e$ for all $e\in E(G)$, and thus $a$ is the probability of the event $\Omega_i(\{u,u',v,v'\},\emptyset)$, namely the probability that all the four nodes have color $i$ under $F$; else, if $S=M^{i,j}$, then $B_e=Y^{i,j}_e$ for all $e\in E(G)$; observe that there are two bipartitions of $\{u,u',v,v'\}$ into sets $A$ and $B$ such that $|A|=|B|=2$ and neither $A$ nor $B$ induces one of the edges $e$ and $e'$. Hence $b$ is two times the probability that all the nodes in $A$ have one of the colors $i$ or $j$ and all the nodes in $B$ have the other color. Hence $b$ is four times the probability of the event that all nodes in $A$ have color $i$ and all nodes in $B$ have color $j$, that is $b=4\pra{\Omega_i(A,B)\wedge\Omega_j(B,A)}$. Therefore, still by Lemma \ref{lem:fund}, one has
\[
b=
\begin{cases}
	\frac{\kn{c_i}{4}}{\kn{n}{4}}& \text{if $B_e=Y_e^{i,i}$}\\
	4\frac{\kn{c_i}{2}\kn{c_j}{2}}{\kn{n}{4}} & \text{if $B_e=Y_e^{i,j}$}
\end{cases}.
\] 
By plugging the values of $a$ and $b$ (as well as the corresponding expected values) in \eqref{eq:duevarianze} one achieves the desidered expressions for $\sigma^2_{i,i}$ and $\sigma^2_{i,j}$. It only remains  to prove the formula for the variance of $L^i$. By specializing \eqref{eq:varianzagen} with $S=L^i$, $N=V(G)$, $B_v=W_v^i$ one gets
\begin{equation*}
	\text{var}(L^i)=\Ea{L^i}\left(1-\Ea{L^i}\right)+\sum_{\substack{(u,v)\in V(G)\\ u\not=v}}\pra{W^i_u=1,W^i_v=1}
\end{equation*}
and since $\pra{W^i_u=1,W^i_v=1}=0$ whenever $u$ and $v$ are adjacent nodes of $G$, it follows that  
\begin{equation*}	
	\text{var}(L^i)=\Ea{L^i}\left(1-\Ea{L^i}\right)+\sum_{\substack{(u,v)\in V(G)\\ u\not=v, uv\not\in E(G)}}\pra{\left(X^i_u=1,X^i_v=1\right)\wedge \left(D^i_{N_G(u)\cup N_G(v)}=0\right)}.
\end{equation*}
Hence, after setting $b(u,v)=|N_G(u)\cup N_G(v)|=\deg_G(u)+\deg_G(v)-|N_G(u)\cap N_G(v)|$, by \eqref{eq:forisolates} with $a=2$ and $b=b(u,v)$ it follows that 
\[
\begin{split}
	\pra{\left(X^i_u=1,X^i_v=1\right)\wedge \left(D^i_{N_G(u)\cup N_G(v)}=0\right)}=\frac{\kn{c_i}{2}}{\kn{n}{2}}\frac{\kn{(n-c_i)}{b(u,v)}}{\kn{(n-2)}{b(u,v)}}
\end{split}
\]
and after plugging this expression in the latter sum we obtain the stated formula. The proof is thus completed.
\end{proof}

\subsection*{Classes' size in organism's networks}
For each organism's network, we report in Table~\ref{tabnodes}  the number of nodes for each functional class, and the total number of nodes. Nodes in classes A, B, Y, and Z are included in the total size, but were not considered in the analisys.

\begin{center}

\begin{table}[H]

\begin{tabular}{|c|r|r|r|r|r|r|r|r|r|r|}
%\hline
%\multicolumn{3}{|c|}{\textbf{Organism}}&\multicolumn{2}{|c|}{\textbf{PPI network}}\\

\hline
\hline
\textbf{Species}&\multicolumn{1}{|c|}{\textbf{Bm}}&\multicolumn{1}{|c|}{\textbf{Ec}}&\multicolumn{1}{|c|}{\textbf{Hi}}&\multicolumn{1}{|c|}{\textbf{Hp}}&\multicolumn{1}{|c|}{\textbf{Mt}}&\multicolumn{1}{|c|}{\textbf{Sp}}&\multicolumn{1}{|c|}{\textbf{Tp}}&\multicolumn{1}{|c|}{\textbf{Vc}}&\multicolumn{1}{|c|}{\textbf{Pa}}&\multicolumn{1}{|c|}{\textbf{Sc}}\\
\hline
\hline
C&158&203&89&64&168&43&34&151&113&133\\
\hline
D&26&27&23&18&38&20&12&32&16&53\\
\hline
E&298&262&136&86&185&132&20&216&120&172\\
\hline
F&60&64&51&33&63&60&21&65&47&80\\
\hline
G&143&215&98&29&104&174&41&135&69&147\\
\hline
H&114&109&65&65&119&43&19&121&58&95\\
\hline
I&90&65&41&38&129&32&17&67&18&81\\
\hline
J&153&131&140&118&141&136&113&159&146&336\\
\hline
K&107&135&69&22&123&104&26&133&74&143\\
\hline
L&110&118&100&81&155&101&58&133&51&131\\
\hline
M&138&156&110&82&99&81&59&144&43&42\\
\hline
N&33&72&6&42&9&5&43&90&27&5\\
\hline
O&110&99&76&62&92&48&42&104&43&212\\
\hline
P&105&137&80&42&103&64&22&133&62&76\\
\hline
Q&34&23&13&8&78&7&1&35&10&23\\
\hline
T&60&56&32&15&70&39&20&77&13&79\\
\hline
U&29&33&23&35&18&17&11&35&10&78\\
\hline
V&35&38&16&24&36&54&7&40&21&9\\
\hline
X&871&2076&440&400&2048&651&328&1281&619&4169\\
\hline
\hline
\textbf{total nodes}&\textbf{2675}&\textbf{4020}&\textbf{1609}&\textbf{1264}&\textbf{3779}&\textbf{1811}&\textbf{894}&\textbf{3153}&\textbf{1564}&\textbf{6157}\\
\hline
\hline

\end{tabular}
\caption{For each organisms, the total number of nodes (classes A, B, Y, Z included) and the number of nodes in each considered functional class.}
\label{tabnodes}
\end{table}
\end{center}

\subsection*{Speeding-up $L^{i}$ computation}

We show how to compute efficiently statistics in point 3) in Theorem~1, in particular the variance expression
\begin{equation}\label{eq:sommatoriabis}
\mathrm{var}(L^i) = \Ea{L^i}\left(1-\Ea{L^i}\right)+\frac{\kn{c_i}{2}}{\kn{n}{2}}\sum_{\substack{(u,v)\in V(G)\\ u\not=v, uv\not\in E(G)}}\frac{\kn{(n-c_i)}{b(u,v)}}{\kn{(n-2)}{b(u,v)}}
\end{equation}

Trivially computing the summation in~(\ref{eq:sommatoriabis}) requires $O(n^{3})$ time.
We show now that the time complexity can be lowered to $O\left(\sum_{u \in V(G)} \deg^{3}_{G}(u)\right)$, that becomes $O\left(\sum_{u \in V(G)} \deg^{2}_{G}(u)\right)$ expected time (with very high probability) if hash-tables are used to represent sets, and falling factorial values $\kn{x}{y}$ are approximated by applying  Stirling formula. Since huge networks are usually very sparse, this represents a deep improvement with respect to the computation based on~(\ref{eq:sommatoriabis}).

We first observe that
$$\sum_{\substack{(u,v)\in V(G)\\ u\not=v, uv\not\in E(G)}}\frac{\kn{(n-c_i)}{b(u,v)}}{\kn{(n-2)}{b(u,v)}}=$$
\begin{equation}\label{termini}
=\ \sum_{(u,v)\in V(G)}\frac{\kn{(n-c_i)}{b(u,v)}}{\kn{(n-2)}{b(u,v)}}\  - \ 2\sum_{uv\in E(G)}\frac{\kn{(n-c_i)}{b(u,v)}}{\kn{(n-2)}{b(u,v)}}\ - \sum_{u\in V(G)}\frac{\kn{(n-c_i)}{b(u,u)}}{\kn{(n-2)}{b(u,u)}}
\end{equation}

The second and third summations in (\ref{termini}) contain respectively only $O(m)$ and $O(n)$ terms. The first summation  in (\ref{termini}) contains $O(n^{2})$ terms, and for each pair $u,v$ a different value of exponent $b(u,v)$ could be needed. This actually only occurs for pairs $u,v$ having some common neighbor, while if all pairs had distance larger than 2 a substantial speed-up could be possible. We actually compute the first summation  in (\ref{termini}) as if all pairs $u,v$ had no common neighbors, so that $b(u,v) = \deg_{G}(u) + \deg_{G}(v)$, and then we fix the correct value for pairs $u,v$ such that $\dist(u,v) = 2$---adjacent pairs have already been taken into account in the second summation.

Let us denote $\deg_{G}(u) + \deg_{G}(v)$ by $b'(u,v)$:

$$\sum_{(u,v)\in V(G)}\frac{\kn{(n-c_i)}{b(u,v)}}{\kn{(n-2)}{b(u,v)}}\  =$$
\begin{equation}\label{eq:secondsummation}
=  \sum_{(u,v)\in V(G)}\frac{\kn{(n-c_i)}{b'(u,v)}}{\kn{(n-2)}{b'(u,v)}}
\ + \sum_{\substack{(u,v)\in V(G)\\ \dist(u,v)=2}}\left(\frac{\kn{(n-c_i)}{b(u,v)}}{\kn{(n-2)}{b(u,v)}}\  -\ \frac{\kn{(n-c_i)}{b'(u,v)}}{\kn{(n-2)}{b'(u,v)}} \right)
\end{equation}

The first summation in~(\ref{eq:secondsummation}) is easily computed by means of the \emph{degree histogram} of $G$, where 
$ \deg^{-1}_{G}(d)$ is the number of nodes having degree $d$ in $G$:

$$\sum_{(u,v)\in V(G)}\frac{\kn{(n-c_i)}{b'(u,v)}}{\kn{(n-2)}{b'(u,v)}} = \sum_{\substack{0 \leq d_{1}\leq n\\ 0\leq d_{2}  \leq n}} \deg^{-1}(d_{1}) \deg^{-1}(d_{2}) \frac{\kn{(n-c_i)}{d_1+d_2}}{\kn{(n-2)}{d_1+d_2}}
$$
and can be computed in $O(m)$ time, since at most $2\sqrt{m}$ distinct degree values may occur in a graph.
The second summation in~(\ref{eq:secondsummation}) can be computed by exploring the neighborhood of each node, since $\dist(u,v) = 2$ if and only if $u,v \in N_{G}(z)$ for some node $z$ and $uv \not \in E(G)$; this can be done in $O\left(\sum_{z \in V(G)} \deg^{2}(z)\right)$, that is much smaller that $n^{2}$ for sparse graphs. It is immediate to see that $O\left(\sum_{z \in V(G)} \deg^{2}(z)\right)$ is the dominating term in computing the value of (\ref{eq:sommatoriabis}).

Experiments have been performed for social networks with over $10^{6}$ nodes and $8\cdot 10^{6}$ edges, for which the number of pairs of nodes $u,v$ at distance 2 was order of $10^{8}$.


\begin{thebibliography}{10}
\bibitem{Auk}
Aukett R, Ritchie J, Mill K. (1988) Gender differences in
friendship patterns. Sex Roles 19(1-2):57–66.

\bibitem{Cheadle}
Cheadle, J. E., Schwadel, P. (2012) The friendship dynamics of religion, or the religious dynamics of friendship? A social network analysis  of adolescents who attend small schools. Soc. Sci. Res. 41, 1198–1212.

\bibitem{chowdhary2009}
Chowdhary R., Zhang J.,  Liu J.S. (2009) Bayesian inference of protein–protein interactions from biological literature. {\it Bioinformatics.} 25(12): 1536-1542.

\bibitem{deng2003}
Deng M., Zhang K., Mehta S., Chen T.,  Sun F. (2003) Prediction of protein function using protein-protein interaction data. {\it J. ComputBiol.} 10(6): 947-960.

\bibitem{eakl}
Easley D., Kleinberg J. (2010) Networks, Crowds, and Markets: Reasoning About a Highly Connected World. Cambridge: Cambridge University Press.

\bibitem{ferentino}
Ferentinos, K. (1982) On Tchebycheff type inequalities. Trabajos Estadıst. Investigacion Oper. 33: 125–132. 

%\bibitem{gillis2012}
%Gillis J.,  Pavlidis P. (2012) ``Guilt by Association'' Is the Exception Rather Than the Rule in Gene Networks. {\it PLoS Comput. Biol.} 8(3): e1002444.
%
\bibitem{gulbache2008}
Gulbache N., Lehman S. (2008) The art of community detection. {\it BioEssays.} 30: 934-938.

\bibitem{jansen2003}
Jansen R., Yu H., Greenroom D., Kluger Y., Krogan N.J., Chung S., Emili A., Snyder M., Greenblatt J.F., Gerstein M. (2003) A Bayesian Networks Approach for Predicting Protein-Protein Interactions from Genomic Data. {\it Science.} 302(5644): 449-453.

\bibitem{jeong2001}
Jeong H., Mason S.P., Barabási A.L., Oltvai Z.N. (2001) Lethality and centrality in protein networks. {\it Nature.} 411: 41-42.

\bibitem{Kim}
Kibae K.,  Altmann J., (2017) Effect of homophily on network formation, Commun. Nonlinear Sci. Numer. Simul., 44, 48249--4.

\bibitem{karimi}
Karimi, F., Génois, M., Wagner, C., Singer, P., Strohmaier, M. (2018) Homophily influences ranking of minorities in social networks.  Sci. Rep. 8(1), 1–12.

\bibitem{Kossi}
Kossinets, G. Watts, D. J.  (2009) Origins of homophily in an evolving social network. Am. J. Sociol. 115, 405–450.

\bibitem{knu}
Knuth D. (1997) The Art of Computer Programming, Vol. 1: Fundamental Algorithms Addison-Wesley, Reading, Mass., Third edition.

\bibitem{lancichinetti2010}
Lancichinetti A, Kivel{\"a} A, Saram{\"a}ki J., Fortunato S. (2010) Characterizing the Community Structure of Complex
Networks. {\it PLoS One.} 5(8): e11976.

\bibitem{Mcp0}
McPherson, J. M., Smith-Lovin, L., Cook, J. M., (2001) Birds of a feather: Homophily in social networks. Ann. Rev. Sociol. 27, 415–444.

\bibitem{Mcp1}
McPherson, J. M. Smith-Lovin, L. (1987) Homophily in voluntary organizations: Status distance and the composition of face-to-face  groups. Am. Sociol. Rev. 52, 370–379.

\bibitem{newman0}
Newman, M. E. J. (2003) Mixing patterns in networks. Phys. Rev. E 67, 026126.

\bibitem{newman}
Newman M. (2006) Modularity and community structure in networks. Proc Nat Acad Sci USA 103(23):8577–8582.

\bibitem{oliver2000}
Oliver S. (2000). Guilt-by-association goes global. {\it Nature.} 403: 601-603. 

\bibitem{park2007}
Park J., Barabasi A.L. (2007). Distribution of node characteristics
in complex networks. {\it Proc Natl Acad Sci USA}. 104(46): 17916 –17920.

\bibitem{piovesan2015}
Piovesan D., Giollo M., Ferrari C.,  Tosa S.C.E. (2015) Protein function prediction using guilty by association from interaction networks. {\it Amino Acids.} 47:2583-2592.

\bibitem{prasad2019}
Keshava Prasad T.S.., et al. (2009) Human protein reference database–2009 update. {\it Nucleic Acids Res.} 37:D767–D772.


\bibitem{santoni2009}
Santoni D., Romano-Spica V. (2009). Comparative genomic analysis by microbial COGs self-attraction rate. {\it J Theor Biol.} 258:513-520.

\bibitem{Shrum}
Shrum W., Cheek N.H. Jr., Hunter S.M. (1988) Friendship in school: Gender and racial homophily. Sociol. Educ. pp. 227–239.

\bibitem{szklarczyk2017}
Szklarczyk D., Morris J.H., Cook H., Kuhn M., Wyder S., Simonovic M., Santos A., Doncheva N.T., Roth A., Bork P., Jensen L.J., and von Mering C. (2017)
The STRING database in 2017: quality-controlled protein-protein association networks, made broadly accessible.
{\it Nucleic Acids Res.} 45:D362-68.

\bibitem{szklarczyk2019}
Szklarczyk D., Gable AL., Lyon D., Junge A., Wyder S., Huerta-Cepas J., Simonovic M., Doncheva NT., Morris JH., Bork P., Jensen LJ.,  von Mering C. (2019)
STRING v11: protein-protein association networks with increased coverage, supporting functional discovery in genome-wide experimental datasets.
{\it Nucleic Acids Res.} 47: D607-613.

\bibitem{takac2012}
Takac L.,  Zabovsky M. (2012). Data Analysis in Public Social Networks. {\it International Scientific Conference \& International Workshop Present Day Trends of Innovations} Lomza, Poland. 

\bibitem{vonmering2003}
von Mering C., Huynen M., Jaeggi D., Schmidt S., Bork P.,  Snel B. (2002)
STRING: a database of predicted functional associations between proteins.
{\it Nucleic Acids Res.} 31:258-61.

\bibitem{vonmering2002}
Von Mering C., Krause R., Snel B., Cornell M., Oliver SG., Fields S.,  Bork P.  (2002) Comparative assessment of large-scale datasets of protein–protein interactions. {\it Nature.} 417: 399-403. 

\bibitem{yangleskove}
Yang J., Leskove J. (2015) Defining and evaluating network communities based on ground-truth. {\it Knowl Inf Syst.} 42:181–213.
\end{thebibliography}
\end{document}